\pgfplotsset{width=7cm,compat=1.8}
\begin{document}
\begin{frontmatter}
  \title{Joint Distributions in Probabilistic Semantics}

  \author{Dexter Kozen}
  \author{Alexandra Silva}
  \address[a]{Computer Science Department\\ Cornell University\\ Ithaca, New York 14853-7501, USA}
  \author{Erik Voogd}
  \address[b]{Department of Informatics\\ University of Oslo\\ Gaustadalléen 23B, 0373 Oslo, Norway}

\begin{abstract} 
Various categories have been proposed as targets for the denotational semantics
of higher-order probabilistic programming languages.
One such proposal involves joint probability distributions (couplings) used
in Bayesian statistical models with conditioning. In previous treatments, composition of joint measures was performed by disintegrating to
obtain Markov kernels, composing the kernels, then reintegrating to obtain
a joint measure. Disintegrations exist only under certain restrictions on the 
underlying spaces. 
In this paper we propose a category whose morphisms are joint finite measures
in which composition is defined without reference to disintegration, allowing its application to a broader class of spaces. The category is symmetric
monoidal with a pleasing symmetry in which the dagger
structure is a simple transpose.
\end{abstract}
\begin{keyword}
probabilistic programming, disintegration, Bayesian inference, conditioning
\end{keyword}
\end{frontmatter}

\section{Introduction}\label{sec:intro}
\emph{Bayesian inference} and \emph{conditioning} are important tools in probabilistic programming. Modern probabilistic languages for machine learning, e.g.~Church \cite{Church08} and Anglican \cite{Anglican15}, generally incorporate these tools in some form. To formalize the semantics of such languages, several authors have proposed categories for modeling Bayesian inference and conditioning via \emph{disintegration}. 

Approaches based on disintegration have the disadvantage that they can only be applied to certain spaces. At the core of the matter is the {\em disintegration theorem} which only holds under certain assumptions---usually that the spaces are standard Borel, as in e.g.~\cite{Pachl78,ChangPollard97,AbramskyBlutePanangaden99,Doberkat07,Panangaden09,Dahlqvist18}. This theorem connects joint distributions and Markov Kernels. A Markov kernel $P$ gives rise to a joint distribution $\J P$ on $X\times Y$ with marginals $\mu$ and $\nu$ in a natural way. The {disintegration theorem} says that this construction can be inverted: for any joint distribution $\theta$ on $X\times Y$ with marginals $\mu$ and $\nu$, there is a kernel $P$, unique up to a $\mu$-nullset, such that $\theta=\J P$. The inversion can be explored in defining composition and proving properties of the underlying category; we will give two examples below. 

In this paper, we want to show that the category $\JDist$ whose objects are measure spaces $(X,\AA,\mu)$ and whose morphisms $(X,\AA,\mu)\to(Y,\BB,\nu)$ are joint distributions $\theta$ on $X\times Y$ with marginals $\mu$ and $\nu$ can be taken as first-class citizen to formalize semantics of probabilistic languages. In particular, we will show that composition in $\JDist$ can be defined without having to use disintegration, thereby making this category more generally applicable than previous approaches. 

One recent proposal by Dahlqvist et al.~\cite{Dahlqvist18} was the category $\Krn$, whose objects are probability spaces $(X,\AA,\mu)$ with $\AA$ a standard Borel space. The morphisms $(X,\AA,\mu)\to(Y,\BB,\nu)$ are equivalence classes modulo $\mu$-nullsets of Markov kernels $P\colon X\to Y$ such that
\begin{align*}
\nu(B) &= \int_{s\in X}P(s,B)\, d\mu(s).
\end{align*}
Dahlqvist et al.~\cite{Dahlqvist18} use disintegration to show that $\Krn$ is a dagger category with involutive functor ${}^\dagger:\Krn\to\Krn^{\mathrm{op}}$. Thus $P^\dagger:Y\to X$ is a kernel in the opposite direction that models Bayesian inference of an input distribution conditioned on output samples. This works even for continuous distributions in which outcomes can occur with zero probability. We will show that the category $\Krn$ can be embedded faithfully in the category $\JDist$, and this is a symmetric monoidal category with joint distributions as tensors and transpose as symmetry: $\theta^\dagger(A\times B) = \theta(B\times A)$. 

Another category was proposed by Abramsky et al.~\cite{AbramskyBlutePanangaden99} who studied a subcategory of $\JDist$ on Polish spaces called $\PRel$, using disintegration to define composition: to compose two joint distributions, one disintegrates to
obtain Markov kernels, composes the kernels as in $\Krn$, then reintegrates to obtain a joint measure. 
This construction can be performed only under various restrictions admitting disintegration \cite{AbramskyBlutePanangaden99,Doberkat07,Panangaden09,CulbertsonSturtz14,ChoJacobs17}. The most common assumption is that the spaces are standard Borel. The most general result of this type seems to be that of Culbertson and Sturtz \cite{CulbertsonSturtz14} based on work of Faden \cite{Faden85}, who assume countably generated spaces but that the measures are perfect. 

In this paper, we show that composition in $\JDist$ can be defined independently without reference to disintegration and without any restriction on the underlying spaces. This allows probabilistic programs to be interpreted in a more general class of spaces with a pleasing symmetry in which the notion of equivalence up to a nullset is built in. We will illustrate our approach in defining the composition in $\JDist$ through an example in \Cref{sec:example}. We will then define the composition in $\JDist$ in 3 steps. We will first present the definition in the discrete setting (\Cref{sec:finite}), as this is a simpler case which does not require extra measure infrastructure. Second, we will define and prove some results akin to \RN\ approximants (\Cref{sec:LRN}). This will provide the necessary results to define composition in $\JDist$ and a functor $\J \colon \Krn \to \JDist$ that provides a faithful embedding of categories (\Cref{sec:JDist}). 

\section{Illustrative Example}\label{sec:example}
\begin{wrapfigure}{r}{.3\textwidth}
\vspace*{-0.5cm}
\hspace*{-1cm}
\begin{lstlisting}[
  xleftmargin=2em,numbersep=2em,
  numbers=left,numberstyle=\tiny\color{darkgray},
  basicstyle=\footnotesize\ttfamily,
  keywordstyle=\color{blue},
  morekeywords={normal,observe,return}]
x := normal(0,1);
y := normal(x,1);
z := normal(y,1);
observe (z > 0.5);
return (x > 1);
\end{lstlisting}
\end{wrapfigure}
\noindent
To explain how composition in $\JDist$ circumvents disintegration, we consider a modified version of the example from Dahlqvist et al. \cite{Dahlqvist18}, pseudocode presented on the right: the goal is to 
measure the probability that $x > 1$ for $x \dist \Norm 01$ given an observation $z>0.5$ for $z \dist \Norm y1$, and $y \dist \Norm x1$.
That is, $x$ is a standard Gaussian sample (Line~1) and $y$ and $z$ are normally distributed with means $x$ and $y$ respectively, and both with variance one (Lines~2 and~3). 
Intuitively, Line~4 conditions on the observation that $z$ is greater than 0.5, and Line~5 returns the probability of $x$ being greater than one given this observation.

To illustrate the nature of the category $\JDist$, we somewhat informally explain the semantics of this program: first in terms of Markov kernels ($\Krn$), and after that in terms of joint distributions ($\JDist$).
\begin{itemize}
  \item In $\Krn$, the state of the program after Line~1 can be thought of as a measure space $(X,\AA,\mu)$ with $(X,\AA) = (\R,\BB_{\R})$ the Borel $\sigma$-algebra on $\R$, and $\mu$ the standard Gaussian measure.
  \item Line~2 constructs a Markov kernel $P : (X,\AA) \to (Y,\BB)$, where $(Y,\BB) = (\R,\BB_\R)$, that, for every $x$, yields the Gaussian measure with variance one centered around $x$.
    The state of $y$ in the program after Line~2 is then the measure space $(Y,\BB,\nu)$ where $\nu(B) := \int_{x \in X} P(x,B)~d\mu$.
  \item Analogously, Line~3 constructs a kernel $Q : (Y,\BB) \to (Z,\CC)$ of Gaussian measures $Q(y,-)$ on $(Z,\CC) = (\R,\BB_\R)$ centered around $y$, and the state of $z$ is the measure space $(Z,\CC,\rho)$ with $\rho(C) := \int_{y \in Y} Q(y,C)~d\nu$.
    Equivalently, $\rho$ is obtained by integration of the kernel composition $P\cmp Q$ w.r.t. $\mu$.
  \item Lines~(4-5) compute an inverse kernel of $P \cmp Q$ as follows: the composition is integrated w.r.t. $\mu$ to obtain a joint measure $\zeta$ on $X \times Z$.
    This joint measure is transposed to $\zeta^\dagger$ on $Z\times X$ and then disintegrated to the inverse kernel $R$.
    Then $\int_{z \in (0.5,\infty)} R(z,-)~d\rho$ evaluated on the interval $(1,\infty)$ yields the desired result.
\end{itemize}
Lines~(1-3) can be pictorially represented using three concrete samples $x,y,z$ as follows:
$$\scalebox{0.85}{$
\begin{tikzpicture}[-, >=stealth', node distance=5cm, inner sep=0, auto]
\small
 \node (NW) at (0,0) {};
 \node (N) [right of=NW] {};
 \node (NE) [right of=N] {};
 \node (SW) [below of=NW] {};
 \node (S) [below of=N] {};
 \node (SE) [below of=NE] {};
 \path (NW) edge (SW);
 \path (N) edge (S);
 \path (NE) edge (SE);
 \node (X) [below of=SW, node distance=3mm] {$X$};
 \draw[fill=blue!20,scale=0.5,domain=-5:5,smooth,variable=\t, yshift=-5cm]
 plot ({4*exp(-((\t)/1)^2)},\t);
 \node (mu) at (.8,-3.3) {$\mu$};
 \node (zrL) [below of=NW, node distance=2.5cm, rotate=90] {$|$};
 \node (zrLa) [left of=zrL, node distance=3mm] {$0$};
 \node (Y) [below of=S, node distance=3mm] {$Y$};
 \draw[fill=blue!20,scale=0.5,domain=-5:5,smooth,variable=\t, xshift=10cm, yshift=-5cm]
 plot ({2*exp(-((\t)/2)^2)},\t);
 \node (nu) at (6.5,-3.3) {$\nu = \int P~d\mu$};
 \node (Z) [below of=SE, node distance=3mm] {$Z$};
 \draw[fill=blue!20,scale=0.5,domain=-5:5,smooth,variable=\t, xshift=20cm, yshift=-5cm]
 plot ({1*exp(-((\t)/4)^2)},\t);
 \shade [left color=blue!20, right color=blue!20] (10.008,-0.1) -- (10.11,-0.1) -- (10.11,-4.9) -- (10.008,-4.9);
 \node (rho) at (11.5,-3.3) {$\rho = \int Q~d\nu$};
 \shade [left color=red!30, right color=red!10, opacity=0.5] (0,-2) -- (5,-0.8) -- (5,-3.2) -- (0,-2);
 \node (P) at (2.8,-0.7) {$P$};
 \draw[fill=red!10,scale=0.5,domain=-4:4,smooth,variable=\t, xshift=10cm, yshift=-4cm, opacity=0.5]
 plot ({4*exp(-((\t)/1)^2)},\t);
 \node (Px) at (6.8,-2.6) {$P(x,-)$};
 \shade [left color=red!30, right color=red!10, opacity=0.5] (5,-1.7) -- (10,-0.5) -- (10,-2.9) -- (5,-1.7);
 \node (Q) at (7.8,-0.6) {$Q$};
 \draw[fill=red!10,scale=0.5,domain=-5:3,smooth,variable=\t, xshift=20cm, yshift=-3.4cm, opacity=0.5]
 plot ({4*exp(-((\t)/1)^2)},\t);
 \node (Qx) at (11.9,-2.3) {$Q(y,-)$};
 \node (x) [below of=NW, node distance=2cm] {$\bullet$};
 \node (xa) [left of=x, node distance=3mm] {$x$};
 \node (zrM) [below of=N, node distance=2.5cm, rotate=90] {$|$};
 \node (zrMa) [left of=zrM, node distance=3mm] {$0$};
 \node (y) [below of=N, node distance=1.7cm] {$\bullet$};
 \node (ya) [left of=y, node distance=3mm] {$y$};
 \node (zrE) [below of=NE, node distance=2.5cm, rotate=90] {$|$};
 \node (zrEa) [left of=zrE, node distance=3mm] {$0$};
 \node (z) [below of=NE, node distance=2.27cm] {$\bullet$};
 \node (za) [left of=z, node distance=3mm] {$z$};
\end{tikzpicture}\\[1mm]
\vspace*{-4mm}
$}$$
Interpreting the program in terms of joint distributions can be done as follows:
\begin{itemize}
  \item Line~1 constructs the measure space $(X,\AA,\mu)$ with $\mu$ standard Gaussian as above.
  \item Line~2 constructs a joint measure as follows: let $(Y,\BB)$ and $P$ be as above.
    Define the joint measure $\theta$ on $X\times Y$ on its rectangles as
    $$ \theta (A \times B) := \int_{x \in A} P(x,B)~d\mu $$
    Then the left marginal $\theta(-\times Y)$ of $\theta$ is $\mu$ and the right marginal $\theta(X\times -)$ is $\nu$.
    We consider the state of the program after Line~2 to be the joint distribution $\theta$.
    Intuitively, this is a way to \emph{remember the input distribution}, and this is, computationally speaking, crucial for Bayesian inference.
  \item Similarly, Line~3 can be thought of as constructing a joint $\eta$ on $Y \times Z$ using $(Z,\CC) = (\R,\BB_\R)$ and $Q$ as above.
    This joint will have marginals $\nu$ and $\rho$.
    The joint measures obtained from Lines~(1-3) can be visualized as follows:
$$\scalebox{0.85}{$
\begin{tikzpicture}
\begin{axis}[ticks=none, xlabel={$X$}, ylabel={$Y$}, zlabel={$\quad\theta$},
  axis lines = center, axis on top, axis line style = {thick},
  colormap={blue}{color=(blue!20) color=(blue!80)}]
\addplot3 [
    domain=-5:5,
    domain y = -5:5,
    samples = 49,
    samples y = 49,
    surf] {4*exp(-x^2)*2*exp(-(y/2)^2)};
\end{axis}
\end{tikzpicture}
\begin{tikzpicture}
\begin{axis}[ticks=none, xlabel={$Y$}, ylabel={$Z$}, zlabel={$\quad\eta$},
  axis lines = center, axis on top, axis line style = {thick},
  colormap={blue}{color=(blue!20) color=(blue!80)}]
\addplot3 [
    domain=-5:5,
    domain y = -5:5,
    samples = 49,
    samples y = 49,
    surf] {2*exp(-(x/2)^2)*exp(-(y/4)^2)};
\end{axis}
\end{tikzpicture}
$}$$
    In order to say anything about the state after Line~3, we have to be able to compose the joint measures $\theta$ and $\eta$ to $\zeta = \theta \cmp \eta$.
    How to define this composition without assumptions on the underlying space is the central contribution of this paper, explained in the following sections.
  \item Lines~(4,5) transpose the joint measure $\zeta$ on $X \times Z$ to $\zeta^\dagger$ on $Z \times X$ and the output of the program is the $\zeta^\dagger$-measure of the rectangle $(0.5,\infty) \times (1,\infty)$.
    \footnote{Computing the result with observe statements on {\em zero measure events}, e.g. $z=0.5$, still requires disintegration: the joint $\zeta^\dagger$ is disintegrated at $z=0.5$ and then evaluated on $(1,\infty)$.}
\end{itemize}
Our goal now is to define how to compose joint measures $\theta$ and $\eta$ (as above) to a joint measure on $X$ and $Z$ such that
\begin{enumerate}
  \item the marginals of $\theta \cmp \eta$ are $\mu$ and $\rho$, and
  \item integration of the kernel composition $P;Q$ with respect to $\mu$ yields $\theta \cmp \eta$.
\end{enumerate}
Before our general treatment of this problem with continuous measures, we describe the problem and its solution in a discrete setting for instructive purposes.

\section{Discrete Approach}
\label{sec:finite}
We consider a finite state space; with the appropriate convergence assumptions, this generalizes straightforwardly to countable state spaces.
Through the below development, we will have forward pointers to the analogous definitions of the general case which we will present in \Cref{sec:notation}, we mark this as ``c.f. $(n)$ below''.

\subsection{Preliminaries}

Formally, a \emph{probability measure} on a finite set $\{1,\ldots,n\}$ (from now on overloadingly written $n$) is a finite additive set function $2^n \to [0,1]$ defined uniquely on its points by $i \mapsto x_i$ (sometimes written $x(i)$) such that $\sum_{i=1}^n x_i = 1$.
Equivalently, it is a vector $x \in [0,1]^n$ such that $\transpose x \cdot \one=1$, where $\one$ denotes the $n$-dimensional vector of ones.
Integrating a function $f : n \to \R$ (always measurable) w.r.t. $x$ is done by weighting each element $f(i)$ with its corresponding probability measure $x_i$.
Thus, integration in the discrete world is just matrix multiplication $\transpose x f$.

A \emph{Markov kernel} is a map $A : n \to [0,1]^n$ (always measurable) such that $A(i)$ is a probability measure for every $i \in n$.
Equivalently, it is a matrix $A \in [0,1]^{n\times n}$ such that $A \one = \one$.
Integrating a Markov kernel $A$ w.r.t. a probability measure $x$ yields a new probability measure $y = \transpose x A$.
This procedure is given componentwise by $y(j) = \sum_{i=1}^n A(i,j) \cdot x(i)$ (c.f. \eqref{eq:measuretransform} below).

A (joint) probability measure on the product space $n \times n$ is a matrix $\alpha \in [0,1]^{n\times n}$ whose entries all sum to one. 
We have $1 = \sum_{i,j} \alpha(i,j) = \transpose\one \alpha \one = \transpose\one \transpose {\alpha} \one$,
so that the row-sums $\alpha\one$ and the column-sums $\transpose {\alpha} \one$ are probability measures; they are the \emph{left} and \emph{right marginal}, respectively.
A joint measure $\alpha$ on $n \times n$ defines unique marginals by definition, but the converse is not true in general: there can be many joint measures for two given marginals.
(In this setting it is solving $2n$ linear equations of $n^2$ unknowns.)

\subsection{Discrete Markov Kernels to Joint Distributions}
We now describe what the functor $\J \colon \Krn \to \JDist$ does to {\em discrete} Markov kernels. 
For a Markov kernel $A$, first define the map $\mem A : n \to [0,1]^{n\times n}$ by (c.f. \eqref{eq:markovlift} below):
$$
\mem A(i,j,k) = 
\begin{cases}
A(i,k) &\text{if }i=j\\
0& \text{otherwise}
\end{cases}
$$ 
 Equivalently, $A'$ is a $n\times n\times n$ matrix where each page (ranged over by $k$) is a diagonal $n\times n$ matrix, and the diagonal of the $k$-th page is given by the $k$-th column of $A$.

Then, we have that $(\transpose x \cdot A')(i,j) = A(i,j) \cdot x_i$ for every $i,j \in n$ and measure $x \in [0,1]^n$ (c.f. \eqref{eq:JPdef} below).
So, each $i$-th row-sum of $\transpose x \cdot A'$ is just $x_i$ (because the row-sums of $A$ are one) and the column-sums are just entry-wise computations of $\transpose A x$. 
This means that $\transpose x \cdot A'$ is a measure on $n\times n$ with left marginal $x$ and right marginal $y=\transpose A x$.
The functor $\J$ maps the kernel $A$ to the joint measure $\transpose x A'$.
It implicitly depends on $x$.

Disintegrating the joint measure $\J A$ back to a kernel is always possible for discrete kernels, but does this give us back $A$?
In the finite setting, to disintegrate a joint measure $\alpha \in [0,1]^{n\times n}$ on the product space $n_1 \times n_2$ (where $n_1 = n_2 = n$), let $\pi : n_1 \times n_2 \to n_1$ be the first projection, and $x = \alpha \circ \pi^{-1}$ be the pushforward measure of $\alpha$ through this projection.
The choice of name for $x$ is not a coincidence; it is the left marginal of $\alpha$: $$x(i) = \sum_{j=1}^n \alpha(i,j).$$

A \emph{disintegration} of $\alpha$ along $\pi$ is defined as a finite set $\left\{a^{(i)}\right\}_{i\in n_1}$ of measures on $n_2$ such that 
$$ \forall (i,j) \in n_1 \times n_2 : \quad \alpha(i,j) = a^{(i)}(j) \cdot x(i) $$
Equivalently, this is a kernel $n_1 \to [0,1]^{n_2}$, or a matrix in $[0,1]^{n_1\times n_2}$.
The condition for disintegration in a non-discrete setting contains an integral amounting to a finite sum here, but the measures are uniquely defined by their point masses, so this condition suffices.
In addition, the usual definition involves measurability conditions, but these are trivially satisfied here.
The kernel $A$ such that $\transpose x A = \transpose y$ is a disintegration of the joint measure $\J A = \transpose x \mem A$ defined in the previous paragraph.

\smallskip
The question is now whether $A$ is the only disintegration of $\J A$. 
The answer is yes, \emph{but only up to negligible events}. 
More precisely, given $\alpha$, the disintegration $a^{(i)}$ at $i$ -- the $i$-th row of the kernel $A$ -- can be anything if $x(i) = 0$.
It is then natural to consider an equivalence class $\equiv_x$ on kernels $A,B \in [0,1]^{n\times n}$ defined by
\begin{align*} 
  A \equiv_x B ~:\iff~ \forall i \in n~\big( x_i > 0 \implies A(i) = B(i) \big) \tag{\textup{c.f. } \eqref{eq:equivmu}}
\end{align*}
Recall that, for all $i,j$, $\J A(i,j) = A(i,j) \cdot x_i$ and $\J B(i,j) = B(i,j) \cdot x_i$.
So, we will have that 
\begin{align} \label{eq:discdisintegrate}
\J A = \J B\text{ if and only if }A \equiv_x B 
\end{align}
This is exactly the analogue of Lemma~\ref{lem:disintegrate2} below in the discrete case.

The goal of our work is to compose joint measures $\alpha,\beta \in [0,1]^{n\times n}$ with marginals $x,y\in [0,1]^n$ for $\alpha$ and marginals $y,z \in [0,1]^n$ for $\beta$, to a joint measure $\gamma \in [0,1]^{n\times n}$, without disintegrating them to kernels first.
We want this composition to satisfy some sensible properties: if $A$ and $B$ are kernels such that $\alpha = \J A = \transpose x A'$ and $\beta = \J B = \transpose y B'$, then
\begin{enumerate}
  \item the marginals of $\gamma$ are $x$ and $z$;
  \item $\gamma = \J (A \cdot B) = \transpose x (AB)'$ (c.f. Theorem~\ref{thm:embed}, $\J$ is a functor); and
  \item if $C$ is such that $\gamma = \J C$ then $AB \equiv_x C$ (faithfulness of $\J$, c.f. Lemma~\ref{lem:disintegrate2}).
\end{enumerate}
In the discrete case, the composition $\gamma$ can be defined entry-wise by
\begin{align}
  \gamma (i,k) = \sum_{\substack{j=1\\y_j>0}}^n \frac {\alpha(i,j) \cdot \beta(j,k)} {y_j} \tag{\textup{c.f. }\eqref{eq:comp}}
\end{align}
We can then verify the above properties. First, the left marginal of $\gamma$ is $x$:
$$ (\gamma \one)(i) = \sum_{k=1}^n \sum_{\substack{j=1\\y_j>0}}^n \frac{\alpha(i,j) \cdot \beta(j,k)} {y_j} = \sum_{\substack{j=1\\y_j>0}}^n \Big( \sum_{k=1}^n \beta(j,k) \Big) / y_j \cdot \alpha(i,j) = \sum_{j=1}^n \alpha(i,j) = (\alpha \one)(i) = x_i $$
Here, we have used that $\sum_{k=1}^n \beta(j,k) = y_j$ and, if, for some $j$, $y_j=0$, then $\sum_{i=1}^n \alpha(i,j) = 0$, so $\alpha(i,j)=0$ (which is why we can leave out the subscript $y_j > 0$ from the sum).
A similar calculation shows that its right marginal is $z$, so property \rom 1 is verified.
For property \rom 2, we have
$$ \J (A \cdot B) (i,k) = (A\cdot B)(i,k) \cdot x_i = \sum_{j=1}^n A(i,j)\cdot B(j,k) \cdot x_i = \sum_{\substack {j=1\\y_j>0}}^n \alpha(i,j) \cdot B(j,k) \cdot \frac {y_j} {y_j} = \sum_{\substack {j=1\\y_j>0}}^n \frac{\alpha(i,j) \cdot \beta(j,k)}{y_j} $$
identifying $\J (A \cdot B)$ with $\gamma$.
Property \rom 3 is immediate from \eqref{eq:discdisintegrate} and property \rom 2.

\section{$\Krn$ and $\JDist$}
\label{sec:notation}
We will define the basics to introduce the categories $\Krn$ and $\JDist$, though the composition of the latter will be defined in \Cref{sec:JDist} as we will need a few more results to provide it without resorting to disintegration. However, we can already give the mapping $\J \colon \Krn \to \JDist$ and prove that it is well-defined (Lemma~\ref{lem:disintegrate2}). 
\subsection{Preliminaries}
Let $(X,\AA)$ and $(Y,\BB)$ be measurable spaces. We abbreviate $(X,\AA)$ by $X$ when $\AA$ is understood. The letters $A,B,C,D$ denote measurable sets. The space $(X,\AA)$ is a \emph{standard Borel space} if $\AA$ is the set of Borel sets of a Polish space (separable and completely metrizable). The space $(X,\AA)$ is \emph{countably generated} if there exists a countable set $\AA_0\subseteq\AA$ such that $\AA$ is the smallest $\sigma$-algebra containing $\AA_0$. All standard Borel spaces are countably generated.
For an in-depth treatment of measure theory, see, e.g., \cite{Doberkat07,Halmos13,Rao87}.

A \emph{Markov kernel} $P:X\to Y$ is a map $P:X\times\BB\to[0,1]$ such that
\begin{itemize}
\item
for fixed $s\in X$, $P(s,-)$ is a probability measure on $Y$,
\item
for fixed $B\in\BB$, $P(-,B)$ is a measurable function on $X$.
\end{itemize}
These properties allow kernels to be sequentially composed by Lebesgue integration. The class of measurable spaces and Markov kernels forms a category \SRel\ \cite{Panangaden09,Doberkat07}, which is isomorphic to the Kleisli category of the Giry monad. We write $P:X\to Y$ for the kernel $P$ regarded as a morphism in this category.

For $P:X\to Y$ a Markov kernel and $\mu$ a finite measure on $X$, write $\mu\cmp P$ for the measure on $Y$ such that
\begin{align} \label{eq:measuretransform}
(\mu\cmp P)(B) &= \int_X P(s,B)\, d\mu(s). 
\end{align}
This gives a bounded linear map $(-\cmp P):\Meas X\to\Meas Y$ that is monotone and continuous in both the metric and Scott topologies~\cite{cantor}.

For $A\in\AA$, let $A$ also denote the subidentity kernel
\begin{align*}
A(s,B) &= \One_X(s,A\cap B) = \begin{cases}
1, & s\in A\cap B,\\
0, & s\not\in A\cap B.
\end{cases}
\end{align*}
Then for all $A\in\AA$ and $B\in\BB$,
\begin{align}
(\mu\cmp A\cmp P)(B) &= \int_A P(s,B)\, d\mu(s).\label{eq:kerJ}
\end{align}

The category $\Krn$\footnote{It is worth noting that the category $\Krn$ mentioned in the introduction, considered by Dahlqvist et al.~\cite{Dahlqvist18}, restricts $\AA$ to be standard Borel; we do not impose this restriction.} has as objects probability spaces $(X,\AA,\mu)$. The morphisms $(X,\AA,\mu)\to(Y,\BB,\nu)$ are equivalence classes modulo $\mu$-nullsets of Markov kernels $P\colon X\to Y$ such that
\begin{align*}
\nu(B) &= \int_{s\in X}P(s,B)\, d\mu(s).
\end{align*}

\subsection{From $\Krn$ to $\JDist$}
The category $\JDist$ is the category whose objects are probability spaces $(X,\AA,\mu)$ and whose morphisms $(X,\AA,\mu)\to(Y,\BB,\nu)$ are joint distributions or \emph{couplings} on $X\times Y$ with marginals $\mu$ and $\nu$. We will define composition in $\JDist$ formally in \Cref{sec:JDist}, but we can already define the embedding functor $\J:\Krn\to\JDist$. It is the identity on objects, and for morphisms $P:X\to Y$, let $\mem P:X\to X\times Y$ be the kernel that behaves like $P$, but also remembers its input state: on measurable rectangles $A\times B$,
\begin{align}\label{eq:markovlift}
\mem P(s,A\times B) &= \begin{cases}
P(s,B), & \text{if $s\in A$,}\\
0, & \text{if $s\not\in A$.}
\end{cases}
\end{align}
The kernel $P'$ together with the probability measure $\mu$ on $X$ induce a joint measure $\J P$ on $X\times Y$ whose value on measurable rectangles $A\times B$ is
\begin{align}
\J P(A\times B) &= \int_X \mem P(s,A\times B)\, d\mu(s) = \int_A P(s,B)\, d\mu(s).\label{eq:JPdef}
\end{align}
Since morphisms in $\Krn$ are equivalence classes of kernels, there is at this point an obligation for proof of well-definedness.
For $P,Q:X\to Y$ Markov kernels and $\mu$ a measure on $X$, the equivalence is defined as
\begin{align}\label{eq:equivmu}
P\equiv_\mu Q\ \Iff\ \forall B\in\BB\ \ \mu(\set s{P(s,B)\ne Q(s,B)})=0.
\end{align}
The following lemma now shows that $\J$ is well-defined:
\begin{lemma}
\label{lem:disintegrate2}
Let $(X,\AA)$ and $(Y,\BB)$ be measurable spaces, $\BB$ countably generated. Let $P,Q:X\to Y$ be Markov kernels and
$\mu$ a probability measure on $X$. The following are equivalent:
\begin{enumerate}[{\upshape(i)}]
\item
$P\equiv_\mu Q$
\item
$P\equiv_\xi Q$ for all $\xi$ absolutely continuous with respect to $\mu$ {\upshape(}notation: $\xi\ll\mu${\upshape)}
\item
for all $A\in\AA$, $\mu\cmp A\cmp P = \mu\cmp A\cmp Q$
\item
$\J P = \J Q$, considering $P$ and $Q$ as $\Krn$-morphisms $(X,\AA,\mu)\to(Y,\BB,\nu)$, where $\nu=\mu\cmp P=\mu\cmp Q$.
\end{enumerate}
\end{lemma}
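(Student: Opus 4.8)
The plan is to establish the four-way equivalence by closing a short cycle of implications, namely $(i)\Rightarrow(iii)\Rightarrow(iv)\Rightarrow(i)$, together with the essentially free equivalence $(i)\Leftrightarrow(ii)$. The guiding observation is that each of $(i)$, $(iii)$ and $(iv)$ is really the same integral identity phrased at a different level of generality (pointwise $\mu$-a.e., integrated against every $A\in\AA$, and repackaged as a joint measure on $X\times Y$), so most of the argument is bookkeeping via \eqref{eq:kerJ} and \eqref{eq:JPdef}, and the only genuine content sits in one direction.

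First the cheap links. For $(i)\Leftrightarrow(ii)$: taking $\xi=\mu$ gives $(ii)\Rightarrow(i)$ at once, and conversely if $\xi\ll\mu$ then every $\mu$-nullset is a $\xi$-nullset, so the per-$B$ nullsets witnessing $(i)$ via \eqref{eq:equivmu} also witness $P\equiv_\xi Q$; this uses nothing beyond the definitions. For $(i)\Rightarrow(iii)$: fixing $A$ and $B$, the functions $P(\cdot,B)$ and $Q(\cdot,B)$ agree $\mu$-a.e. by $(i)$, hence have equal integrals over $A$, and by \eqref{eq:kerJ} this is exactly $(\mu\cmp A\cmp P)(B)=(\mu\cmp A\cmp Q)(B)$. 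For $(iii)\Leftrightarrow(iv)$: \eqref{eq:kerJ} and \eqref{eq:JPdef} identify $(\mu\cmp A\cmp P)(B)$ with $\J P(A\times B)$, so $(iii)$ says precisely that $\J P$ and $\J Q$ agree on every measurable rectangle; since the rectangles form a $\pi$-system generating $\AA\otimes\BB$ and both $\J P$ and $\J Q$ are probability measures, the uniqueness-of-measures (Dynkin $\pi$-$\lambda$) argument upgrades rectangle-agreement to full equality $\J P=\J Q$, while the reverse implication is immediate by restriction to rectangles.

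This leaves $(iv)\Rightarrow(i)$ (equivalently $(iii)\Rightarrow(i)$), which I expect to be the main obstacle and the only place the countable-generation hypothesis on $\BB$ is needed. From $(iv)$ we recover, as above, $\int_A P(s,B)\,d\mu(s)=\int_A Q(s,B)\,d\mu(s)$ for all $A\in\AA$ and $B\in\BB$. For a \emph{fixed} $B$, writing $f_B=P(\cdot,B)-Q(\cdot,B)$ and testing on $A=\{s:f_B(s)>0\}$ and $A=\{s:f_B(s)<0\}$ forces $\int_{\{f_B>0\}}f_B\,d\mu=0$ and $\int_{\{f_B<0\}}f_B\,d\mu=0$, hence $f_B=0$ $\mu$-a.e., which is exactly the per-$B$ nullset condition of \eqref{eq:equivmu}. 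The subtlety is that each $B$ produces its own $\mu$-nullset $N_B$, and an uncountable union of nullsets need not be null, so one cannot directly conclude that $P(s,\cdot)$ and $Q(s,\cdot)$ coincide \emph{as measures} for $\mu$-almost every $s$ (the strong, pointwise reading matching the discrete picture in \eqref{eq:discdisintegrate}).

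The plan to overcome this is to fix a countable algebra $\BB_0$ generating $\BB$, set $N=\bigcup_{B\in\BB_0}N_B$ (a countable, hence $\mu$-null, union), and observe that for every $s\notin N$ the two probability measures $P(s,\cdot)$ and $Q(s,\cdot)$ agree on the generating $\pi$-system $\BB_0$ and therefore, by another appeal to uniqueness of measures, agree on all of $\BB$. Thus off the single conull set $X\setminus N$ the kernels coincide, which delivers $(i)$ in its intended strong form and closes the cycle. The key technical point throughout is the passage from equalities of \emph{integrals} to equalities \emph{$\mu$-almost everywhere}, and the role of countable generation is precisely to let this be done uniformly over the uncountable family of test sets $B$.
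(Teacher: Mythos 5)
Your proof is correct, and its skeleton---the free equivalence (i)$\Leftrightarrow$(ii), the easy (i)$\Rightarrow$(iii) via a.e.\ equality of the integrands, the identification of (iii) with (iv) through \eqref{eq:kerJ} and \eqref{eq:JPdef}, plus one substantive converse---is the same as the paper's. The differences lie in how you prove the converse and in what you prove at the end, and both work in your favor. For (iii)$\Rightarrow$(i) the paper argues by contraposition: it takes $A=\{s:|P(s,B)-Q(s,B)|\ge 1/n\}$ of positive measure and infers $\int_A P(s,B)\,d\mu\ne\int_A Q(s,B)\,d\mu$ from $\int_A|P(s,B)-Q(s,B)|\,d\mu\ge\mu(A)/n>0$; as written this overlooks possible cancellation between the parts of $A$ where $P-Q$ has opposite signs. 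Your choice of test sets $\{f_B>0\}$ and $\{f_B<0\}$, with $f_B=P(\cdot,B)-Q(\cdot,B)$, on each of which the integrand has a fixed sign, is exactly the repair, so your argument is tighter than the paper's. Second, note that \eqref{eq:equivmu} defines $\equiv_\mu$ as the per-$B$ nullset condition, so your positive/negative-set step already yields (i) and closes the cycle; your final paragraph---upgrading, via a countable generating $\pi$-system and uniqueness of measures, to $P(s,\cdot)=Q(s,\cdot)$ as measures for $\mu$-a.e.\ $s$---proves something strictly stronger than the stated (i). That upgrade is not needed for the lemma as formulated, but it is the only place where the hypothesis that $\BB$ be countably generated is actually used (the paper's own proof never invokes it), so your addition usefully explains why that hypothesis appears in the statement. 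Likewise, your explicit $\pi$-$\lambda$ argument that rectangle-agreement gives $\J P=\J Q$ on the whole product $\sigma$-algebra is more careful than the paper's one-line appeal to \eqref{eq:kerJ} and \eqref{eq:JPdef}.
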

\begin{proof}
The equivalence of (i) and (ii) is clear from the definition \eqref{eq:equivmu}.

For (i) $\Imp$ (iii), suppose that $P\equiv_\mu Q$. Let $E_B=\set s{P(s,B)=Q(s,B)}$. By definition, $\mu(X{\setminus}E_B)=0$.
For all $A\in\AA$ and $B\in\BB$,
\begin{align}
(\mu\cmp A\cmp P)(B) &= \int_A P(s,B)\, d\mu(s)\nonumber\\
&= \int_{A\cap E_B} P(s,B)\, d\mu(s)  + \int_{A{\setminus}E_B} P(s,B)\, d\mu(s)\label{eq:char1}\\
&= \int_{A\cap E_B} Q(s,B)\, d\mu(s) + \int_{A{\setminus}E_B} Q(s,B)\, d\mu(s)\label{eq:char2}\\
&= \int_{A} Q(s,B)\, d\mu(s)
= (\mu\cmp A\cmp Q)(B).\nonumber
\end{align}
The left-hand terms of \eqref{eq:char1} and \eqref{eq:char2} agree because the integral is restricted to $E_B$, and the right-hand terms are 0 because $A\setminus E_B$ is a $\mu$-nullset.
As $B$ was arbitrary, $\mu\cmp A\cmp P = \mu\cmp A\cmp Q$.

Conversely, for (iii) $\Imp$ (i), if $P\not\equiv_\mu Q$, then $\mu(\set s{\len{P(s,B)-Q(s,B)}\ge 1/n}) > 0$ for some $B\in\BB$ and $n>0$.
Letting $A$ be this set, we have
\begin{align*}
\int_{A} \len{P(s,B)-Q(s,B)}\, d\mu(s) &\ge {\textstyle\frac 1n}\mu(A) > 0,
\end{align*}
so
\begin{align*}
(\mu\cmp A\cmp P)(B) &= \int_{A} P(s,B)\, d\mu(s)
\ne \int_{A} Q(s,B)\, d\mu(s) = (\mu\cmp A\cmp Q)(B).
\end{align*}

The equivalence of (iii) and (iv) follows from \eqref{eq:kerJ} and \eqref{eq:JPdef}.
\end{proof}

Let $(X,\AA)$ be a measurable space. The countable measurable partitions $\DD$ of $X$ form an upper semilattice ordered by refinement, denoted $\sqle$, with least common refinement as join, denoted $\sqcup$. We will often consider the limiting behavior of functions defined on increasingly finer such partitions. For any such map $\phi$ taking values in a topological space, if $\phi(\DD_n)$ converges to the same value for all chains $\DD_0\sqle\DD_1\sqle\cdots$ that eventually become sufficiently fine, we write $\lim_\DD \phi(\DD)$ for this value.

\section{\RN\ Approximants}
\label{sec:LRN}

Integration and \RN\ derivatives are typically defined in terms of approximants. We will use a similar technique to define the composition of joint measures in $\JDist$. In this section we develop the necessary infrastructure.

Let $\nu$ and $\mu$ be finite measures on $(X,\AA)$. For any $B\in\AA$, consider the set
\begin{align}
\set{\fras{\nu(C)}{\mu(C)}}{C\subseteq B,\ \mu(C)>0}\subseteq \mathbb R.\label{eq:qset}
\end{align}
This set is nonempty iff $\mu(B)>0$. In that case, the set has a finite infimum, since $\nu(B)/\mu(B)$ is a member, but it may be unbounded above.

\begin{lemma}
\label{lem:approx}
Let $\nu$ and $\mu$ be finite measures on $(X,\AA)$.
For any $\eps>0$, there exists a countable measurable partition $\DD$ of $X$ such that for all $B\in\DD$ with $\mu(B)>0$, the set \eqref{eq:qset} is bounded above and
\begin{align}
\supCB - \infCB &\le \eps & 
(\supCB - \infCB)(\supCB) &\le \eps^2.\label{eq:approx2}
\end{align}
Moreover, these properties are preserved under refinement; that is, if $\DD\sqle\DD'$ and $\DD$ satisfies the inequalities \eqref{eq:approx2} for all $B\in\DD$ with $\mu(B)>0$, then the same is true of $\DD'$.
\end{lemma}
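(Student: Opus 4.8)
The plan is to build the partition by slicing $X$ according to the ``local density'' of $\nu$ against $\mu$, as measured by the ratio $\nu(C)/\mu(C)$, choosing the slicing thresholds so that both inequalities in \eqref{eq:approx2} hold at once. The first inequality merely asks that the ratio vary by at most $\eps$ on each block, which a uniform grid of width $\eps$ on the density axis would give; the second, quadratic inequality is the real constraint, since it forces the blocks on which the density is large to be correspondingly thinner. The key arithmetic observation is that if a block confines the ratio to an interval $[t_k,t_{k+1}]$, then the left side of the second inequality is at most $(t_{k+1}-t_k)\,t_{k+1}\le t_{k+1}^2-t_k^2$, so it suffices to pick thresholds with $t_{k+1}^2-t_k^2\le\eps^2$ and $t_{k+1}-t_k\le\eps$. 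Both hold for the choice $t_k=\eps\sqrt{k}$, $k=0,1,2,\dots$, since then $t_{k+1}^2-t_k^2=\eps^2$ exactly and $t_{k+1}-t_k=\eps/(\sqrt{k+1}+\sqrt{k})\le\eps$.

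To realise these thresholds as sets, I would invoke the Hahn decomposition of the finite signed measure $\nu-t_k\mu$, obtaining for each $k$ a positive set $P_k$ and negative set $N_k$, so that $\nu(C)\ge t_k\mu(C)$ for $C\subseteq P_k$ and $\nu(C)\le t_k\mu(C)$ for $C\subseteq N_k$. Because a positive set for $\nu-t\mu$ is automatically positive for $\nu-s\mu$ whenever $s\le t$, I can replace $P_k$ by the nested family $\widehat P_k=\bigcup_{j\ge k}P_j$, which is still positive for $\nu-t_k\mu$ (a countable union of positive sets is positive) and satisfies $\widehat P_0\supseteq\widehat P_1\supseteq\cdots$; dually $\widehat N_k=X\setminus\widehat P_k\subseteq N_k$ is negative for $\nu-t_k\mu$. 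Taking $P_0=X$ (legitimate, since $\nu-0\cdot\mu=\nu\ge0$) I set $B_k=\widehat P_k\setminus\widehat P_{k+1}=\widehat P_k\cap\widehat N_{k+1}$ and $B_\infty=\bigcap_k\widehat P_k$. These blocks are disjoint and cover $X$, yielding a countable measurable partition $\DD=\{B_k\}_{k\ge0}\cup\{B_\infty\}$.

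Verifying the inequalities is then immediate. For $C\subseteq B_k$ with $\mu(C)>0$, the inclusions $C\subseteq\widehat P_k$ and $C\subseteq\widehat N_{k+1}$ give $t_k\le\nu(C)/\mu(C)\le t_{k+1}$, so the set \eqref{eq:qset} for $B_k$ lies in $[t_k,t_{k+1}]$; in particular it is bounded above, its oscillation is at most $t_{k+1}-t_k\le\eps$, and the product of oscillation and supremum is at most $(t_{k+1}-t_k)\,t_{k+1}\le t_{k+1}^2-t_k^2=\eps^2$. On $B_\infty$ one has $\nu(C)\ge t_k\mu(C)$ for every $k$, so $\mu(C)>0$ would force $\nu(C)=\infty$; hence $\mu(B_\infty)=0$ and the requirement is vacuous there. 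The refinement clause is the easy part: if $\DD\sqle\DD'$ and $B'\in\DD'$ lies inside $B\in\DD$, then every $C\subseteq B'$ is a subset of $B$, so the ratio set for $B'$ is contained in that for $B$; thus $\infCB$ can only increase and $\supCB$ can only decrease on passing from $B$ to $B'$, and since both factors in each inequality move favourably, the inequalities are inherited.

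The main obstacle I anticipate is the measure-theoretic bookkeeping around the Hahn decompositions rather than the estimates. Hahn decompositions are unique only up to $\lvert\nu-t_k\mu\rvert$-nullsets, so the nesting $\widehat P_k\supseteq\widehat P_{k+1}$ must be arranged by hand via the $\bigcup_{j\ge k}$ device above, and one must check that enlarging $P_k$ in this way preserves the positivity needed for the lower bound while its complement retains the negativity needed for the upper bound. The only other delicate point is the singular, infinite-density part $B_\infty$, which turns out to be harmless precisely because it is a $\mu$-nullset and the hypotheses quantify only over blocks with $\mu(B)>0$.
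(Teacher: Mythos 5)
Your proof is correct and takes essentially the same route as the paper's: Hahn decompositions of the signed measures $\nu - t_k\mu$ at an increasing sequence of thresholds, bands formed from consecutive positive/negative sets, the residual infinite-density set shown to be $\mu$-null, and the refinement clause from shrinkage of the ratio sets. The only differences are cosmetic: the paper uses thresholds $t_k = \eps\ln k$ (with $\ln(1+\frac 1k)\le\frac 1k$ playing the role of your telescoping $t_{k+1}^2-t_k^2=\eps^2$) and arranges nesting by intersecting positive sets, $\bigcap_{i=1}^k A_i^+\cap A_{k+1}^-$, instead of your union device $\bigcup_{j\ge k}P_j$, while leaving the refinement clause implicit where you spell it out.
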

\begin{proof}
For $k\ge 1$, consider the signed measure\footnote{The signed measure $\nu-\eps k\mu$ already implies the left-hand inequality of \eqref{eq:approx2}, which suffices for most purposes, and standard treatments use this. But $\nu - (\eps\ln k)\mu$ gives the right-hand inequality as well, which will be useful in \S\ref{sec:JDist} in the definition of composition in $\JDist$.} $\nu - (\eps\ln k)\mu$. By the Hahn decomposition theorem, there exists a family of measurable partitions $\{A_k^+,A_k^-\}$ of $X$, one for each natural number $k\ge 1$, such that $\nu - (\eps\ln k)\mu$ is purely nonegative on $A_k^+$ and purely nonpositive on $A_k^-$.
That is, for all measurable $C\subseteq A_k^+$, $\nu(C) \ge (\eps\ln k)\mu(C)$ and for all measurable $C\subseteq A_k^-$, $\nu(C) \le (\eps\ln k)\mu(C)$. Without loss of generality we can assume $A_1^+=X$ and $A_1^-=\emptyset$. Let the partition $\DD$ consist of the pairwise disjoint sets $\bigcap_{i=1}^k A_i^+\cap A_{k+1}^-$, $k\ge 1$, along with $\bigcap_{i=1}^\infty A_i^+$. The last is a $\mu$-nullset, since if $\mu(C)>0$, then $C\not\subseteq A_{k}^+$ for any $k > \mathrm{exp}(\frac{\nu(C)}{\eps\mu(C)})$.

For any measurable $C\subseteq \bigcap_{i=1}^k A_i^+\cap A_{k+1}^-$, we have $(\eps\ln k)\mu(C)\le\nu(C)\le(\eps\ln(k+1))\mu(C)$, so if $\mu(C)>0$, then $\nu(C)/\mu(C)$ exists and lies in the interval $[\eps\ln k,\eps\ln(k+1)]$. Since $\ln(1+x)\le x$ for $x\ge 1$,
\begin{gather*}
\eps\ln(k+1)-\eps\ln k = \eps\ln(\frac{k+1}k) = \eps\ln(1+\frac 1k) \le \frac\eps k \le \eps\\
(\eps\ln(k+1)-\eps\ln k)\eps\ln(k+1) = \eps\ln(1+\frac 1k)\eps\ln(k+1) \le \frac\eps k\cdot\eps k = \eps^2,
\end{gather*}
from which the bounds \eqref{eq:approx2} follow.
\end{proof}

Let $\DD$ be a countable measurable partition of $\AA$. For $s\in X$, define
\begin{gather}
F_\DD = \bigcup\,\set{B\in\DD}{\mu(B)>0}\label{eq:fd}\\[1ex]
\ff_\DD^+(s) = \sum_{\substack{B\in\DD\\\mu(B)>0}}\supCB\cdot\One_X(s,B) \qquad
\ff_\DD^-(s) = \sum_{\substack{B\in\DD\\\mu(B)>0}}\infCB\cdot\One_X(s,B).\label{eq:fm}
\end{gather}
The set $F_\DD$ is measurable with $\mu(X\setcompl F_\DD)=0$, and
the functions $\ffi_\DD^+$ and $\ffi_\DD^-$ are measurable step functions that vanish outside
$F_\DD$. From Lemma \ref{lem:approx}, we have that for sufficiently fine $\DD$,
\begin{align*}
\ff_\DD^+ - \ff_\DD^- &\le \eps &
\left(\ff_\DD^+ - \ff_\DD^-\right)\ff_\DD^+ &\le \eps^2.
\end{align*}

These definitions lead to an enhanced version of the \RN\ theorem.

\begin{theorem}[Lebesgue-\RN]
\label{thm:LRN}\ 
Let $\nu$ and $\mu$ be finite measures on $(X,\AA)$.
There exist measures $\nu_0,\nu_1$, a measurable set $F\in\AA$, a measurable real-valued function $f$ defined on $X$, and a countable $\sqle$-chain $\DD_0\sqle\DD_1\sqle\cdots$ such that
\begin{enumerate}[{\upshape(i)}]
\item {\upshape(}Lebesgue decomposition{\upshape\kern.5pt)}
$\nu_0$ and $\nu_1$ form a Lebesgue decomposition of $\nu$ on $F$ with respect to $\mu$; that is,
\begin{align*}
\nu &= \nu_0+\nu_1 & \nu_0 &\ll \mu & \nu_1(F) &= 0 & \mu(X\setcompl F) &= 0;
\end{align*}
\item {\upshape(}\RN\ theorem{\upshape\kern.5pt)}
$f(s)=0$ for all $s\not\in F$ and
\begin{align}
\int_A f(s)\, d\mu(s) &= \nu_0(A),\ A\in\AA;\label{eq:RN}
\end{align}
\item {\upshape(}Uniform approximation{\upshape\kern.5pt)}
The sequence $\ffi_n^- = \ffi_{\DD_n}^-$ is monotone nondecreasing on $F$,
the sequence $\ffi_n^+ = \ffi_{\DD_n}^+$ is monotone nonincreasing everywhere, and
both sequences converge pointwise to $f$ and converge uniformly on $F$.
\end{enumerate}
\end{theorem}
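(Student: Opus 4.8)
The plan is to build the chain $\DD_0\sqle\DD_1\sqle\cdots$ out of \Cref{lem:approx}, read off $f$ and $F$ as limits of the step functions \eqref{eq:fm}, and then identify $f$ with the density of the absolutely continuous part of $\nu$. First I would apply \Cref{lem:approx} with $\eps=2^{-n}$ and let $\DD_n$ be the common refinement of the partitions returned for the indices $0,\dots,n$. This is a $\sqle$-chain, and since the bounds \eqref{eq:approx2} are preserved under refinement, each $\DD_n$ satisfies them with $\eps=2^{-n}$. Using \eqref{eq:fd} I set $F=\bigcap_n F_{\DD_n}$. Every positive-measure block of $\DD_{n+1}$ lies inside a positive-measure block of $\DD_n$, so the sets $F_{\DD_n}$ decrease; as each is co-$\mu$-null, this gives $\mu(X\setcompl F)=0$. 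I then define $f(s)=\lim_n\ffi_n^+(s)$, which exists by monotonicity (established next) and which I note is measurable as a pointwise limit of measurable step functions.

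For part (iii) the monotonicity is a direct consequence of refinement: passing from a block $B$ to a sub-block $B'\subseteq B$ shrinks the ratio set \eqref{eq:qset}, so its supremum can only decrease and its infimum can only increase. Hence $\ffi_n^+$ is nonincreasing everywhere, since a block whose measure collapses to $0$ merely lowers the value from $\supCB\ge0$ to $0$; whereas $\ffi_n^-$ is nondecreasing only on $F$, where no block containing $s$ ever loses its measure (off $F$ a value $\infCB>0$ can drop to $0$). On $F$ we have $\ffi_n^-\le f\le\ffi_n^+$ with $\ffi_n^+-\ffi_n^-\le 2^{-n}$ by \eqref{eq:approx2}, which yields pointwise convergence of both sequences to $f$ together with a bound independent of $s$, i.e.\ uniform convergence on $F$; off $F$ both step functions are eventually $0=f$, so the pointwise convergence holds everywhere.

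It remains to produce the decomposition and verify (ii). I set $\nu_0(A)=\int_A f\,d\mu$ and $\nu_1=\nu-\nu_0$; then $\nu_0\ll\mu$ and the identity \eqref{eq:RN} hold by construction, and $f$ is integrable because $\int_{F_{\DD_n}}\ffi_n^-\,d\mu\le\nu(X)$ and monotone convergence bound $\int_F f\,d\mu\le\nu(X)$. The substance is that $\nu_1$ is a genuine nonnegative measure with $\nu_1(F)=0$. For nonnegativity, on each positive-measure block $B$ of $\DD_n$ the inequality $\infCB\,\mu(A\cap B)\le\nu(A\cap B)$ summed over blocks gives $\int_A\ffi_n^-\,d\mu\le\nu(A\cap F_{\DD_n})$; letting $n\to\infty$, with monotone convergence on the left and continuity from above of the finite measure $\nu$ (as $F_{\DD_n}\downarrow F$) on the right, yields $\int_A f\,d\mu\le\nu(A)$, so $\nu_1\ge0$. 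For $\nu_1(F)=0$, the two-sided block estimate $\bigl|\int_B f\,d\mu-\nu(B)\bigr|\le 2^{-n}\mu(B)$ (both quantities lie in the interval $[\int_B\ffi_n^-\,d\mu,\int_B\ffi_n^+\,d\mu]$ of length $\le 2^{-n}\mu(B)$) summed over $\DD_n$ gives $\bigl|\int_F f\,d\mu-\nu(F_{\DD_n})\bigr|\le 2^{-n}\mu(X)$, and passing to the limit yields $\nu_0(F)=\nu(F)$, i.e.\ $\nu_1(F)=0$.

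I expect the main obstacle to be exactly the identification of $f$ with the density of the absolutely continuous part, namely proving the two measure relations $\int_A f\,d\mu\le\nu(A)$ and $\int_F f\,d\mu=\nu(F)$ at once. Each forces an interchange of the blockwise estimates from \eqref{eq:approx2} with the limit along the chain, and the fact that $\ffi_n^-$ is monotone only on $F$ means one must consistently restrict attention to $F$ and invoke continuity from above of $\nu$, rather than treating the two approximants symmetrically.
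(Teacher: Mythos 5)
Your proposal is correct, and its skeleton---the chain built from Lemma~\ref{lem:approx} with $\eps=2^{-n}$ and least common refinements, $F=\bigcap_n F_{\DD_n}$, $f=\lim_n\ffi_n^+$, and the refinement-monotonicity argument for part (iii)---is the same as the paper's. Where you genuinely diverge is in the treatment of parts (i) and (ii). The paper defines the decomposition by restriction, $\nu_0(C)=\nu(C\cap F)$ and $\nu_1(C)=\nu(C\setcompl F)$, and proves $\nu_0\ll\mu$ by reaching into the \emph{internals} of Lemma~\ref{lem:approx}: it uses the Hahn decomposition sets $A_k^-$ constructed in that lemma's proof (and assumes the chain refines those particular partitions) to conclude that $\nu(C\cap F)=0$ whenever $\mu(C)=0$; it then establishes \eqref{eq:RN} by sandwiching $\nu_0(A)$ between $\int_A\ffi_n^-\,d\mu$ and $\int_A\ffi_n^+\,d\mu$, using the resulting implication \eqref{eq:addN}. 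You instead define $\nu_0(A)=\int_A f\,d\mu$, so that $\nu_0\ll\mu$ and \eqref{eq:RN} hold by construction, and you shift all the work to showing that $\nu_1=\nu-\nu_0$ is a nonnegative measure vanishing on $F$, which you do with the one-sided estimate $\int_A\ffi_n^-\,d\mu\le\nu(A\cap F_{\DD_n})$, the block sandwich $\bigl|\int_B f\,d\mu-\nu(B)\bigr|\le 2^{-n}\mu(B)$, monotone convergence, and continuity from above of the finite measure $\nu$ along $F_{\DD_n}\downarrow F$; I checked these steps and they are sound. Your route treats Lemma~\ref{lem:approx} as a black box---only its stated conclusions enter---so the argument is more modular and does not depend on how the approximating partitions were produced; the paper's route buys the explicit identification of the absolutely continuous part as the piece of $\nu$ living on $F$, i.e.\ $\nu_0=\nu(\cdot\cap F)$, together with \eqref{eq:addN}, which it reuses in proving (ii). The two decompositions coincide in the end, since your conclusion $\nu_1(F)=0$ together with $f$ vanishing off $F$ forces $\nu_0(C)=\nu(C\cap F)$ for all $C$, but your proof never needs to say so.
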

If $\nu\ll\mu$, we can take $\nu_0=\nu$ and $\nu_1=0$ in (i), in which case (ii) gives $\int_A f(s)\, d\mu(s) = \nu(A)$.
In this case, $f$ is the standard \RN\ derivative $d\nu/d\mu$.

The version \cite[Theorem 3, p.~258]{Rao87} asserts (i) and (ii) only without
reference to the approximants $\ffi_n^+$ and $\ffi_n^-$,
but (iii) is essentially how it is proved. In fact, \cite{Rao87} gives three proofs. We give a fourth here for completeness.

\begin{proof}
Let $\ffi_\DD^+$, $\ffi_\DD^-$, and $F_\DD$ be defined as in \eqref{eq:fd} and \eqref{eq:fm}.
By definition,
\begin{align*}
\ff_\DD^-(s) &\le \sum_{\substack{B\in\DD\\\mu(B)>0}}\numu B\cdot\One_X(s,B)
\le \ff_\DD^+(s).
\end{align*}
If $\DD\sqle\DD'$, then $F_{\DD'} \subseteq F_\DD$, $\ffi_\DD^- \le \ffi_{\DD'}^-$ pointwise on $F_{\DD'}$, and $\ffi_{\DD'}^- \le \ffi_{\DD'}^+ \le \ffi_\DD^+$ pointwise everywhere. Moreover, by Lemma \ref{lem:approx}, for all $\eps>0$ there exists a sufficiently fine $\DD$ that $\ffi_\DD^+ - \ffi_\DD^- < \eps$ pointwise. It follows that for any countable chain $\DD_0\sqle\DD_1\sqle\cdots$ of sufficiently fine countable measurable partitions, $\ffi_n^+$ and $\ffi_n^-$ converge pointwise to a measurable function $f = \inf_n \ffi_n^+$ and converge uniformly on $F = \bigcap_n F_{\DD_n}$. In addition, the region of uniform convergence $F$ is of full $\mu$-measure, as $\mu(F)=\inf_n\mu(F_{\DD_n})=\mu(X)$, and $f$ vanishes outside $F$. This establishes (iii).

For (i), if $\mu(C)=0$, then $\nu(C\cap A_k^-) \le (\eps\ln k)\mu(C\cap A_k^-) = 0$ for all $k$, where the $A_k^-$ are the Hahn decomposition sets constructed in the proof of Lemma \ref{lem:approx}. Assuming $\DD_n$ refines one of the partitions defined in that lemma, we have $F\subseteq F_{\DD_n}\subseteq\bigcup_k A_k^-$, so $\nu(C\cap F)=0$. Taking $\nu_0(C)=\nu(C\cap F)$ and $\nu_1(C)=\nu(C\setcompl F)$ give a Lebesgue decomposition of $\nu$ on $F$ with respect to $\mu$; in particular,
\begin{align}
\mu(C)=0 &\Imp \nu(C) = \nu(C\setcompl F) = \nu_1(C).\label{eq:addN}
\end{align}

For (ii), for sufficiently fine $\DD$,
\begin{align*}
\int_A \ff_\DD^+(s)\, d\mu(s) &= \sum_{\substack{B\in\DD\\\mu(B)>0}}\supCB\cdot\mu(A\cap B)
\le \sum_{\substack{B\in\DD\\\mu(B)>0}}(\infCB+\eps)\cdot\mu(A\cap B)\\
&\le \sum_{\substack{B\in\DD\\\mu(B)>0}}(\numu{B}+\eps)\cdot\mu(B)
\le \nu(X) + \eps\mu(X),
\end{align*}
thus the integral exists and is finite. Moreover,
\begin{align}
\int_A \ff_n^+(s)\, d\mu(s)
&= \sum_{\substack{B\in\DD_n\\\mu(B)>0}}\supCB\cdot \mu(A\cap B)
= \sum_{\substack{B\in\DD_n\\\mu(A\cap B)>0}}\supCB\cdot \mu(A\cap B).\label{eq:supit}
\end{align}
The right-hand equality in \eqref{eq:supit} follows from the observation that all summands corresponding to $B\in\DD_n$ with $\mu(A\cap B)=0$ vanish, whereas for those with $\mu(A\cap B)>0$, the test $\mu(B)>0$ is redundant. Specializing the supremum in \eqref{eq:supit} at $C=A\cap B$,
\begin{align*}
\int_A \ff_n^+(s)\, d\mu(s)
&\ge \sum_{\substack{B\in\DD_n\\\mu(A\cap B)>0}}\nu(A\cap B)
= \nu(A) - \sum_{\substack{B\in\DD_n\\\mu(A\cap B)=0}}\nu(A\cap B)\\
&= \nu(A) - \sum_{\substack{B\in\DD_n\\\mu(A\cap B)=0}}\nu_1(A\cap B) & \text{by \eqref{eq:addN}}\\
&\ge \nu(A) - \sum_{\substack{B\in\DD_n}}\nu_1(A\cap B)
= \nu(A) - \nu_1(A)
= \nu_0(A).
\end{align*}
Similarly,
\begin{align*}
\int_A \ff_n^-(s)\, d\mu(s)
&= \sum_{\substack{B\in\DD_n\\\mu(A\cap B)>0}}\infCB\cdot \mu(A\cap B)
= \sum_{\substack{B\in\DD_n\\\mu(A\cap B\cap F)>0}}\infCB\cdot \mu(A\cap B\cap F).
\end{align*}
The former equality follows from an argument similar to \eqref{eq:supit}, the latter from the fact that $\mu$ vanishes outside $F$. Specializing the infimum at $C=A\cap B\cap F$,
\begin{align*}
\int_A \ff_n^-(s)\, d\mu(s)
&\le \sum_{\substack{B\in\DD_n\\\mu(A\cap B\cap F)>0}}\nu(A\cap B\cap F)
\le \sum_{\substack{B\in\DD_n}}\nu_0(A\cap B)
= \nu_0(A).
\end{align*}
Thus $\nu_0(A)$ is the limit \eqref{eq:RN}.
\end{proof}

The value of the integral \eqref{eq:RN} is independent of the choice of the $\sqle$-chain $\DD_n$, but $F$ and $f$ are not, and there is no one choice that works uniformly for all $\sqle$-chains. That is why $d\nu/d\mu$ is only defined up to a $\mu$-nullset. However, by taking least common refinements, one can find $F$ and $f$ that work for several $\sqle$-chains at once. That is one reason for making (iii) explicit.

\section{Composition in $\JDist$}
\label{sec:JDist}

To show that $\JDist$ is a category and that $\J:\Krn\to\JDist$ is a functor, we must define composition and the identity morphisms in $\JDist$ and show that they are preserved by $\J$.

Composition in $\JDist$ is defined as follows. For $\theta:(X,\AA,\mu)\to(Y,\BB,\nu)$ and $\eta:(Y,\BB,\nu)\to(Z,\CC,\xi)$, define
\begin{align}\label{eq:comp}
(\theta\cmp\eta)(A\times C)\ =\ \lim_\DD \sum_{\substack{B\in\DD\\\nu(B)>0}} \frac{\theta(A\times B)\cdot\eta(B\times C)}{\nu(B)},
\end{align}
where the limit is taken over countable measurable partitions $\DD$ of the mediating space $Y$.
We argue below (Theorem \ref{thm:arbint}) that the limit exists.

The identity morphisms are the joint distributions $\J\One_X$ obtained from the identity kernels $\One_X:X\to X$.
Checking associativity of the composition is a mechanical exercise, where one employs commutativity of the countable limit sums over the partitions of the two intermediating spaces: for $\theta$ and $\eta$ as above, and $\zeta : (Z,\CC,\xi) \to (W,\FF,\rho)$, one has
$$ (\theta\cmp\eta\cmp\zeta)(A\times F) = \lim_\DD \sum_{\substack{B\in\DD\\\nu(B)>0}} \lim_\EE \sum_{\substack{C\in\EE\\\xi(C)>0}} \frac {\theta(A\times B) \cdot \eta(B \times C) \cdot \zeta (C \times F)}{\nu(B) \cdot \xi(C)} $$
where $\DD$ and $\EE$ range over countable measurable partitions of $Y$ and $Z$, respectively.

Note that the definition of composition is completely symmetric in the input and output space. The category $\JDist$ is thus a dagger category whose involution ${}^\dagger$ is composition with transpose:
$\theta^\dagger(C\times A) = \theta(A\times C)$.

\begin{theorem}
\label{thm:arbint}
Let $\mu$, $\nu$, $\xi$ be finite measures on $Y$.
Let $\ffi_\DD^+$ and $\ggi_\DD^+$ be the approximants defined in \eqref{eq:fm}.
If there exists a countable measurable partition $\DD$ such that 
\begin{align*}
\int_Y \ff_\DD^+(t)\,\gg_\DD^+(t)\, d\mu(t) < \infty,
\end{align*}
then the limit
\begin{align}
\lim_\DD \sum_{\substack{B\in\DD\\\mu(B)>0}} \frac{\nu(B)\cdot\xi(B)}{\mu(B)}
\label{eq:arbintlimit}
\end{align}
exists and is equal to
\begin{align}
\int_Y f(t)\,g(t)\, d\mu(t) &= \inf_n \int_Y \ff_{\DD_n}^+(t)\,\gg_{\DD_n}^+(t)\, d\mu(t)
\label{eq:arbintlimit2}
\end{align}
for any sufficiently fine countable $\sqle$-chain $\DD_0\sqle\DD_1\sqle\cdots$, where
$f = \inf_n\ffi_{\DD_n}^+$ and $g = \inf_n \ggi_{\DD_n}^+$.
\end{theorem}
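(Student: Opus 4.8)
The plan is to recognize the sum in \eqref{eq:arbintlimit} as a $\mu$-integral of a product of two step functions, to trap that product between the products of the $+$ and $-$ approximants of Lemma \ref{lem:approx}, and to pass to the limit using the hypothesis as an integrable dominating bound. Concretely, for a countable measurable partition $\DD$ let $h_\DD$ be the step function equal to $\nu(B)/\mu(B)$ on each block $B\in\DD$ with $\mu(B)>0$ (and $0$ on the remaining $\mu$-null set), and let $k_\DD$ be defined the same way from $\xi$. Since $\int_Y\One_X(t,B)\,d\mu(t)=\mu(B)$, one reads off immediately
$$\sum_{\substack{B\in\DD\\\mu(B)>0}}\frac{\nu(B)\cdot\xi(B)}{\mu(B)}=\int_Y h_\DD(t)\,k_\DD(t)\,d\mu(t).$$
Because $\nu(B)/\mu(B)$ is the member of the set \eqref{eq:qset} obtained by taking $C=B$, we have $\infCB\le\nu(B)/\mu(B)\le\supCB$; hence pointwise $0\le\ffi_\DD^-\le h_\DD\le\ffi_\DD^+$ and $0\le\ggi_\DD^-\le k_\DD\le\ggi_\DD^+$, and by nonnegativity these multiply to the sandwich $\ffi_\DD^-\ggi_\DD^-\le h_\DD k_\DD\le\ffi_\DD^+\ggi_\DD^+$.

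Next I would fix a partition $\EE$ with $\int_Y\ffi_\EE^+\ggi_\EE^+\,d\mu<\infty$ as in the hypothesis, and work along any sufficiently fine $\sqle$-chain $\DD_0\sqle\DD_1\sqle\cdots$ that eventually refines both $\EE$ and the partitions constructed in Lemma \ref{lem:approx} (writing $\ffi_n^\pm=\ffi_{\DD_n}^\pm$, $\ggi_n^\pm=\ggi_{\DD_n}^\pm$, $h_n=h_{\DD_n}$, $k_n=k_{\DD_n}$, and letting $\eps_n\to0$ be the approximation level of $\DD_n$). By Theorem \ref{thm:LRN}(iii) the sequences $\ffi_n^+$ and $\ggi_n^+$ are nonincreasing everywhere with $\ffi_n^+\downarrow f$ and $\ggi_n^+\downarrow g$, so $\ffi_n^+\ggi_n^+$ is nonincreasing and converges pointwise to $fg$. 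Since $+$-approximants decrease under refinement, for all large $n$ we have $\ffi_n^+\ggi_n^+\le\ffi_\EE^+\ggi_\EE^+$, which is $\mu$-integrable by hypothesis; dominated convergence applied to the tail then gives $\int_Y\ffi_n^+\ggi_n^+\,d\mu\to\int_Y fg\,d\mu$, and monotonicity of this nonincreasing real sequence identifies the limit with its infimum. This already establishes \eqref{eq:arbintlimit2} (and in particular that $fg\in L^1(\mu)$).

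It remains to show that $\int_Y h_n k_n\,d\mu$ shares this limit. Using $\ffi_\DD^+-\ffi_\DD^-\le\eps$ and $\ggi_\DD^+-\ggi_\DD^-\le\eps$ from Lemma \ref{lem:approx}, the expansion $\ffi^+\ggi^+-\ffi^-\ggi^-=(\ffi^+-\ffi^-)\ggi^++\ffi^-(\ggi^+-\ggi^-)$ yields $\ffi_\DD^+\ggi_\DD^+-\ffi_\DD^-\ggi_\DD^-\le\eps(\ffi_\DD^++\ggi_\DD^+)$, so with the elementary bounds $\int_Y\ffi_\DD^+\,d\mu\le\nu(Y)+\eps\mu(Y)$ and its $g$-analogue (both from the proof of Theorem \ref{thm:LRN}),
$$0\le\int_Y\big(\ffi_n^+\ggi_n^+-h_n k_n\big)\,d\mu\le\int_Y\big(\ffi_n^+\ggi_n^+-\ffi_n^-\ggi_n^-\big)\,d\mu\le\eps_n\big(\nu(Y)+\xi(Y)+2\eps_n\mu(Y)\big),$$
which tends to $0$ as the chain becomes finer. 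Hence $\int_Y h_n k_n\,d\mu\to\int_Y fg\,d\mu$, i.e. the sum in \eqref{eq:arbintlimit} converges to \eqref{eq:arbintlimit2} along every such chain. Finally, since $f$ and $g$ are \RN\ derivatives of the absolutely continuous parts of $\nu$ and $\xi$ (Theorem \ref{thm:LRN}), hence unique up to $\mu$-nullsets, the value $\int_Y fg\,d\mu$ does not depend on the chosen chain, so $\lim_\DD$ exists and equals it. The main obstacle is precisely the product integrability: each of $f,g$ separately lies in $L^1(\mu)$ because $\nu,\xi$ are finite, but their product need not, so the whole argument leans on the hypothesis both to guarantee $\int_Y fg\,d\mu<\infty$ and to furnish the dominating function $\ffi_\EE^+\ggi_\EE^+$ for dominated convergence; everything else is squeezing between the monotone approximants of Theorem \ref{thm:LRN}.
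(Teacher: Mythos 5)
Your proof is correct, and on the decisive step it takes a genuinely different route from the paper's. The common ground: like the paper, you read the sum \eqref{eq:arbintlimit} as the $\mu$-integral of a step function sandwiched between $\ffi_\DD^-\ggi_\DD^-$ and $\ffi_\DD^+\ggi_\DD^+$, and your bound $\int_Y(\ffi_n^+\ggi_n^+-h_nk_n)\,d\mu\le\eps_n\bigl(\nu(Y)+\xi(Y)+2\eps_n\mu(Y)\bigr)$ is essentially the paper's first display in different packaging (the paper compares the sum and $\int_Y\ffi_\DD^+\ggi_\DD^+\,d\mu$ directly, block by block). The divergence is in passing from $\int_Y\ffi_\DD^+\ggi_\DD^+\,d\mu$ to $\int_Y f g\,d\mu$: the paper invokes the second, stronger inequality of \eqref{eq:approx2} to get a uniform pointwise bound $\ffi_\DD^+\ggi_\DD^+-\ffi_\DD^-\ggi_\DD^-\le 2\eps^2$ --- exactly what the footnote in Lemma \ref{lem:approx} says that inequality was engineered for --- whereas you use only the first inequality and obtain the convergence from dominated (equivalently, decreasing monotone) convergence along the chain, with the hypothesis partition $\EE$ furnishing the integrable dominator $\ffi_\EE^+\ggi_\EE^+$. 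Your route buys two things. First, economy: the enhanced estimate \eqref{eq:approx2} becomes superfluous, so the theorem rests on the plain half of Lemma \ref{lem:approx} plus the $L^1$ bound $\int_Y\ffi_\DD^+\,d\mu\le\nu(Y)+\eps\mu(Y)$ already extracted in the proof of Theorem \ref{thm:LRN}. Second, robustness: the paper's pointwise estimate needs bounds on the cross terms $(\ffi_\DD^+-\ffi_\DD^-)\ggi_\DD^+$ and $\ffi_\DD^+(\ggi_\DD^+-\ggi_\DD^-)$, while \eqref{eq:approx2} literally controls only the matched products $(\ffi_\DD^+-\ffi_\DD^-)\ffi_\DD^+$ and $(\ggi_\DD^+-\ggi_\DD^-)\ggi_\DD^+$; since the Hahn-decomposition indices for $\nu$ and for $\xi$ are independent, the cross bound does not follow directly, and integrating the weaker inequality against the finite measure $\mu$, as you do, sidesteps this issue altogether. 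What the paper's approach buys in exchange, where it applies, is a uniform $2\eps^2$ pointwise estimate rather than merely an integrated one. Finally, two points where you are more explicit than the paper: chain-independence of $\int_Y fg\,d\mu$ via $\mu$-a.e.\ uniqueness of the \RN\ derivatives of the absolutely continuous parts (the paper leaves this implicit), and the stipulation that ``sufficiently fine'' chains eventually refine $\EE$ --- a restriction the paper's proof also needs implicitly, since finiteness of the relevant integrals along the chain comes from the same hypothesis.
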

\begin{proof}
By definition of $\ffi_\DD^+$ and $\ggi_\DD^+$,
\begin{align*}
\int_Y \ff_\DD^+(t)\,\gg_\DD^+(t)\, d\mu(t)
&= \int_Y \left(\sumBD\supCB\One_Y(t,B)\right)\left(\sumBD\sup_{\substack{D\subseteq B\\ \mu(D)>0}}\ximu D\One_Y(t,B)\right)\, d\mu(t)\\[2ex]
&= \sumBD\left(\supCB\right)\left(\sup_{\scriptsize \substack{D\subseteq B\\ \mu(D)>0}}\ximu D\right)\,\mu(B).
\end{align*}
To show \eqref{eq:arbintlimit} and \eqref{eq:arbintlimit2}, for arbitrarily small positive $\eps$, we have by Lemma \ref{lem:approx}
\begin{align*}
\sumBD\fras{\nu(B)\xi(B)}{\mu(B)}
&= \sumBD\fras{\nu(B)\xi(B)}{\mu(B)\mu(B)}\,\mu(B)
\le \int_Y \ff_\DD^+(t)\,\gg_\DD^+(t)\, d\mu(t)\\
&\le \sumBD\left(\infCB + \eps\right)\left(\inf_{\substack{D\subseteq B\\ \mu(D)>0}}\ximu D+\eps\right)\,\mu(B)\\
&\le \sumBD\left(\numu B + \eps\right)\left(\ximu B+\eps\right)\,\mu(B)\\
&\le \sumBD\left(\fras{\nu(B)\xi(B)}{\mu(B)} + \eps\xi(B) + \eps\nu(B) + \eps^2\mu(B)\right)\\
&\le \left(\sumBD\fras{\nu(B)\xi(B)}{\mu(B)}\right) + \eps\xi(Y) + \eps\nu(Y) + \eps^2\mu(Y).
\end{align*}
For the remaining statement \eqref{eq:arbintlimit2}, we use the stronger claim \eqref{eq:approx2} of Lemma \ref{lem:approx}.
Since
\begin{align*}
& \ff_\DD^-(t) \le f(t) \le \ff_\DD^+(t) && \gg_\DD^-(t) \le g(t) \le \gg_\DD^+(t),
\end{align*}
we have
\begin{align*}
\ff_\DD^-(t)\gg_\DD^-(t) &\le f(t)g(t) \le \ff_\DD^+(t)\gg_\DD^+(t).
\end{align*}
We must choose $\DD_n$ so that
\begin{align*}
\ff_n^+(t)\gg_n^+(t) - \ff_n^-(t)\gg_n^-(t) 
\end{align*}
becomes arbitrarily small.
By \eqref{eq:approx2} of Lemma \ref{lem:approx}, $\DD$ can be chosen so that
\begin{align*}
\lefteqn{\ff_\DD^+(t)\gg_\DD^+(t) - \ff_\DD^-(t)\gg_\DD^-(t)}\qquad\\
&\le (\ff_\DD^+(t) - \ff_\DD^-(t))\gg_\DD^+(t) + \ff_\DD^+(t)(\gg_\DD^+(t) - \gg_\DD^-(t))
\le 2\eps^2.
\end{align*}
\end{proof}

\begin{corollary}
The map $\theta\cmp\eta$ as defined in \eqref{eq:comp} on measurable rectangles extends to a joint probability measure on $X\times Z$.
\end{corollary}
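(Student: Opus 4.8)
The plan is to exhibit $\theta\cmp\eta$ as a premeasure on the algebra generated by measurable rectangles and then invoke the Carath\'eodory (Hahn--Kolmogorov) extension theorem. Almost all of the structural content is cheap; the single genuinely hard point is countable additivity, and I would build the whole argument around two observations: the finite-partition approximants of \eqref{eq:comp} are already honest measures, and the limit carries an integral representation supplied by Theorem~\ref{thm:arbint}.

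First I would fix a countable measurable partition $\DD$ of $Y$ and set $\nu_\DD(A\times C)=\sum_{B\in\DD,\ \nu(B)>0}\theta(A\times B)\,\eta(B\times C)/\nu(B)$. For each such block $B$ the slice $A\mapsto\theta(A\times B)$ is a finite measure on $(X,\AA)$ and $C\mapsto\eta(B\times C)$ is a finite measure on $(Z,\CC)$, so $\nu_\DD$ is a countable nonnegative combination of the product measures $\theta(\cdot\times B)\otimes\eta(B\times\cdot)/\nu(B)$ and is therefore a bona fide measure on $(X\times Z,\AA\otimes\CC)$, with no hypothesis on the spaces. The marginal identities $\theta(X\times B)=\nu(B)=\eta(B\times Z)$ make its total mass $\sum_{\nu(B)>0}\nu(B)=\nu(Y)=1$, so each $\nu_\DD$ is a probability measure. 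By Theorem~\ref{thm:arbint} the values $\nu_\DD(A\times C)$ converge to $(\theta\cmp\eta)(A\times C)$ along any sufficiently fine $\sqle$-chain; since the rectangles form a semiring and each $\nu_\DD$ is a measure, this limit is finitely additive on rectangles and extends uniquely and additively to the algebra $\mathcal R$ of finite disjoint unions of rectangles, where it is nonnegative, finitely additive, and of total mass one. The same computation shows the marginals are $\mu$ and $\xi$, so the extension will be a legitimate $\JDist$-morphism once it is shown to be a measure.

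By Carath\'eodory it then remains to prove countable additivity on $\mathcal R$, equivalently continuity at $\emptyset$. Here I would switch to the integral representation: applying Theorem~\ref{thm:arbint} with mediating measure $\nu$ and the slice measures $\theta(A\times\cdot)$, $\eta(\cdot\times C)$ (both dominated by $\nu$, hence absolutely continuous with densities $f_A,g_C\in[0,1]$) gives $(\theta\cmp\eta)(A\times C)=\int_Y f_A\,g_C\,d\nu$, with $f_A=d\theta(A\times\cdot)/d\nu$ and $g_C=d\eta(\cdot\times C)/d\nu$. Extending additively, $(\theta\cmp\eta)(D)=\int_Y\Phi_D\,d\nu$ for $D\in\mathcal R$, where $\Phi_D=\sum_k f_{A_k}g_{C_k}\ge 0$ for any decomposition $D=\bigsqcup_k A_k\times C_k$; well-definedness up to a $\nu$-nullset follows from additivity of Radon--Nikodym derivatives under splitting of $A$ and $C$ together with the fact that any two rectangle decompositions of $D$ admit a common refinement. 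For $D_n\downarrow\emptyset$ the functions $\Phi_{D_n}$ decrease $\nu$-a.e.\ (as $\Phi_{D_n}-\Phi_{D_{n+1}}=\Phi_{D_n\setminus D_{n+1}}\ge 0$) and are dominated by the integrable $\Phi_{D_0}$, so by dominated convergence it suffices to show $\Phi_{D_n}\to 0$ $\nu$-a.e.

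The main obstacle is precisely this last reduction, and it is delicate: coordinatewise countable additivity of $(A,C)\mapsto\int_Y f_A g_C\,d\nu$ does \emph{not} force it to be a measure (a separately countably additive bimeasure need not extend to a product measure), so continuity in each argument separately is insufficient. The route that I expect to work — and that explains why no regularity of $X$ or $Z$ is required — is to exploit that a single sequence $D_n\downarrow\emptyset$ involves only countably many rectangles: let $\AA_0\subseteq\AA$ and $\CC_0\subseteq\CC$ be the countably generated sub-$\sigma$-algebras they generate, and choose refining finite partitions of $X$ and $Z$ generating them. For $\nu$-a.e.\ $t$ the densities restricted to these partitions form consistent families of finite probability distributions (consistency is the a.e.\ identity $f_A=\sum_{A'\subseteq A}f_{A'}$, which is only countably many constraints). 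Since the projective limit of finite probability spaces always carries a genuine measure — the inverse-limit extension is automatic by compactness and needs no regularity of the underlying spaces — these families assemble, for $\nu$-a.e.\ $t$, into honest measures $\mu_t$ on $\AA_0$ and $\rho_t$ on $\CC_0$, hence a product measure with $\Phi_D(t)=(\mu_t\otimes\rho_t)(D)$. Continuity at $\emptyset$ for this genuine measure yields $\Phi_{D_n}(t)\to 0$ a.e., closing the dominated-convergence argument and establishing countable additivity. I expect the bookkeeping of this conditional-measure construction — coherent choice of versions, measurability of $t\mapsto(\mu_t\otimes\rho_t)(D)$, and the identification of $\AA_0$ with the inverse-limit $\sigma$-algebra — to be the technical heart of the proof, even though each individual ingredient is standard.
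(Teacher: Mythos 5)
Your reduction of the problem is correct and in places cleaner than the paper's own treatment: each finite-partition approximant $\nu_\DD$ really is a probability measure on $\AA\otimes\CC$, Theorem~\ref{thm:arbint} does yield the representation $(\theta\cmp\eta)(A\times C)=\int_Y f_A\,g_C\,d\nu$ with densities in $[0,1]$, and the reduction of countable additivity to ``$\Phi_{D_n}\to 0$ $\nu$-a.e.\ whenever $D_n\downarrow\emptyset$'' via dominated convergence is valid. The gap is exactly at the step you call the technical heart. Compactness of finite spaces gives a measure on the inverse limit $\varprojlim_k X/\mathcal{P}_k$, i.e.\ on the space of \emph{threads} (consistent sequences of atoms); it does not give a measure on $(X,\AA_0)$, because the canonical map from $X$ to the thread space need not be surjective, and the limit measure may charge phantom threads that contain no point of $X$. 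Already for $X=\mathbb{N}$ with $\mathcal{P}_k=\{\{1\},\dots,\{k\},\{k{+}1,k{+}2,\dots\}\}$ and masses $(0,\dots,0,1)$ the family is consistent but admits no countably additive extension to $\sigma(\bigcup_k\mathcal{P}_k)$: the inverse-limit measure sits on the thread of tails. Producing countably additive $\mu_t$ on $(X,\AA_0)$ for $\nu$-a.e.\ $t$ is precisely the existence of a regular conditional probability (a disintegration of $\theta$ over $Y$ relative to $\AA_0$), which is what the paper is structured to avoid and which genuinely fails without perfectness-type hypotheses, even when $\AA_0$ is countably generated. Nor can you retreat to the thread space: $D_n\downarrow\emptyset$ in $X\times Z$ does not imply that the corresponding sets of thread pairs decrease to $\emptyset$, so continuity at $\emptyset$ is unavailable there.

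Moreover, no repair of this step is possible, because the corollary itself fails at this level of generality. Let $N_1,N_2\subseteq[0,1]$ be disjoint sets with outer Lebesgue measure $\lambda^*(N_1)=\lambda^*(N_2)=1$ (a Bernstein set and its complement), carrying the trace Borel $\sigma$-algebras and trace measures $\mu_i(N_i\cap B)=\lambda(B)$. Let $\theta$ on $N_1\times[0,1]$ and $\eta$ on $[0,1]\times N_2$ be the pushforwards of $\mu_1,\mu_2$ under $x\mapsto(x,x)$; these are genuine couplings, hence $\JDist$-morphisms $(N_1,\mu_1)\to([0,1],\lambda)$ and $([0,1],\lambda)\to(N_2,\mu_2)$. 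The densities of $\theta((N_1\cap B)\times-)$ and $\eta(-\times(N_2\cap B'))$ with respect to $\lambda$ are the indicators of $B$ and $B'$, so your own representation gives $(\theta\cmp\eta)\bigl((N_1\cap B)\times(N_2\cap B')\bigr)=\lambda(B\cap B')$. Now take $E_k=\bigcup_{j<2^k}(N_1\cap I_{j,k})\times(N_2\cap I_{j,k})$ with $I_{j,k}=[j2^{-k},(j{+}1)2^{-k})$: these are finite disjoint unions of rectangles, decreasing in $k$, with $\bigcap_k E_k\subseteq\{(x,y):x=y\}\cap(N_1\times N_2)=\emptyset$, yet any additive extension must assign each $E_k$ the value $\sum_{j}\lambda(I_{j,k})=1$. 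Hence $\theta\cmp\eta$ admits no countably additive extension; equivalently $\Phi_{E_k}\equiv 1$ a.e.\ while $E_k\downarrow\emptyset$ --- for a.e.\ $t$ the would-be conditional measures are point masses at $t$, and $t$ cannot lie in both $N_1$ and $N_2$, which is the phantom-thread phenomenon in pure form. (The same example shows that the set-theoretic claim in the paper's proof of premise (ii), that $\bigcap_i\bigcup_n(A^i_n\times C^i_n)=\emptyset$ forces $\bigcap_i\bigcup_n A^i_n=\emptyset$ or $\bigcap_i\bigcup_n C^i_n=\emptyset$, is false: here those intersections are $N_1$ and $N_2$.) So the statement can only hold under additional hypotheses, such as perfect measures or standard Borel spaces --- the very assumptions the construction was meant to drop; your instinct that the bimeasure-extension step is the true difficulty was exactly right.
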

\begin{proof}
This can be done using the \CHK\ extension theorem\footnote{This result is sometimes referred to simply as Carath\'eodory's extension theorem and Hahn–Kolmogorov theorem, among other names.}. It suffices to verify the premises of that theorem, namely
\begin{enumerate}[{\upshape(i)}]
\item
$\theta\cmp\eta$ is finitely additive on measurable rectangles: if $\{A_n\times C_n\}_n$ is a finite set of pairwise disjoint measurable rectangles whose union is a measurable rectangle, then
\begin{align*}
(\theta\cmp\eta)(\bigcup_n(A_n\times C_n)) &= \sum_n(\theta\cmp\eta)(A_n\times C_n).
\end{align*}
\item
If for each $i\ge 0$, $\{A^i_n\times C^i_n\}_n$ is a finite collection of pairwise disjoint measurable rectangles with $\bigcup_n(A^{i+1}_n\times C^{i+1}_n)\subseteq\bigcup_n(A^i_n\times C^i_n)$, and if $\bigcap_i\bigcup_n(A^i_n\times C^i_n) = \emptyset$, then
\begin{align*}
\inf_i(\theta\cmp\eta)(\bigcup_n(A^i_n\times C^i_n)) = 0.
\end{align*}
\end{enumerate}
For (i), we can assume without loss of generality that the
$A_n$ are pairwise disjoint and the $C_m$ are pairwise disjoint,
and we are to show
\begin{align*}
(\theta\cmp\eta)(\bigcup_n A_n\times\bigcup_m C_m) &= \sum_n\sum_m(\theta\cmp\eta)(A_n\times C_m).
\end{align*}
Since limits commute with finite sums,
\begin{align*}
(\theta\cmp\eta)(\bigcup_n A_n\times\bigcup_m C_m)
&= \lim_\DD\sum_{B\in\DD}\fras{\theta(\bigcup_n A_n\times B)\cdot\eta(B\times\bigcup_m C_m)}{\nu(B)}
= \lim_\DD\sum_{B\in\DD}\sum_n\sum_m\fras{\theta(A_n\times B)\cdot\eta(B\times C_m)}{\nu(B)}\\
&= \sum_n\sum_m\lim_\DD\sum_{B\in\DD}\fras{\theta(A_n\times B)\cdot\eta(B\times C_m)}{\nu(B)}
= \sum_n\sum_m(\theta\cmp\eta)(A_n\times C_m).
\end{align*}
For (ii), 
\begin{align*}
\inf_i(\theta\cmp\eta)(\bigcup_n(A^i_n\times C^i_n))
&= \inf_i\sum_n(\theta\cmp\eta)(A^i_n\times C^i_n)\\
&= \inf_i\sum_n\lim_\DD\sum_{B\in\DD}\fras{\theta(A^i_n\times B)\cdot\eta(B\times C^i_n)}{\nu(B)}
= \inf_i\lim_\DD\sum_{B\in\DD}\sum_n\fras{\theta(A^i_n\times B)\cdot\eta(B\times C^i_n)}{\nu(B)}.
\end{align*}
We argue that if $\bigcap_i\bigcup_n(A^i_n\times C^i_n) = \emptyset$, then either $\bigcap_i\bigcup_n A^i_n = \emptyset$ or $\bigcap_i\bigcup_n C^i_n = \emptyset$. Suppose not. Let $s\in\bigcap_i\bigcup_n A^i_n$ and $t\in\bigcap_i\bigcup_n C^i_n$. Then for all $i$ there exists $n$ such that $s\in A^i_n$ and there exists $m$ such that $t\in C^i_m$. By renumbering if necessary, we can assume that for all $i$, $s\in A^i_1$ and $t\in C^i_1$. Then $(s,t)\in\bigcap_i(A^i_1\times C^i_1)\subseteq\bigcap_i\bigcup_n(A^i_n\times C^i_n)$.

By symmetry, assume without loss of generality that $\bigcap_i\bigcup_n A^i_n = \emptyset$. Since
\begin{align*}
\eta(B\times C^i_n)/\nu(B) = \eta(B\times C^i_n)/\eta(B\times Z) \le 1,
\end{align*}
we have
\begin{align*}
\inf_i\lim_\DD\sum_{B\in\DD}\sum_n\fras{\theta(A^i_n\times B)\cdot\eta(B\times C^i_n)}{\nu(B)}
&\le \inf_i\lim_\DD\sum_{B\in\DD}\sum_n\theta(A^i_n\times B)
= \inf_i\theta(\bigcup_i A^i_n\times Y)
=0.
\end{align*}
\end{proof}

\begin{theorem}[Faithfulness]\label{thm:embed}
The map $\J$ constitutes a faithful embedding $\J:\Krn\to\JDist$.
\end{theorem}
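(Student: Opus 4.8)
The plan is to unpack ``faithful embedding'' into its two components and dispatch them separately. Since $\J$ is the identity on objects it is trivially injective on objects, so ``embedding'' reduces to faithfulness, i.e.\ injectivity on each hom-set; and ``functor'' requires preservation of identities and composition. Two of these are essentially already settled. Faithfulness is exactly the equivalence of (i) and (iv) in Lemma~\ref{lem:disintegrate2}: two kernels $P,Q\colon X\to Y$ denote the same $\Krn$-morphism iff $P\equiv_\mu Q$, and the lemma states precisely that $P\equiv_\mu Q \Iff \J P=\J Q$ (under the standing hypothesis that the target $\sigma$-algebra $\BB$ is countably generated). Preservation of identities is immediate, because the identity morphisms of $\JDist$ are \emph{defined} to be the images $\J\One_X$ of the identity kernels. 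So the only substantive task is that $\J$ preserves composition.

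For this I must show $\J(P\cmp Q)=\J P\cmp\J Q$ as joint measures on $X\times Z$, for $P\colon X\to Y$ and $Q\colon Y\to Z$. As both sides are finite measures and the measurable rectangles form a $\pi$-system generating the product $\sigma$-algebra, it suffices to verify the identity on rectangles $A\times C$ and then invoke uniqueness of measures agreeing on a generating $\pi$-system. The left-hand side is the easy one: by \eqref{eq:JPdef} and associativity of kernel integration,
$\J(P\cmp Q)(A\times C)=\int_A (P\cmp Q)(s,C)\,d\mu(s)=\int_Y Q(t,C)\,d\nu_A(t)$,
where $\nu_A:=\mu\cmp A\cmp P$ is the finite measure on the mediating space $Y$ given on rectangles by $\nu_A(B)=\J P(A\times B)$, cf.\ \eqref{eq:kerJ}.

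The right-hand side is where the infrastructure of \Cref{sec:LRN} earns its keep. I apply Theorem~\ref{thm:arbint} on the mediating space $Y$, taking the reference measure to be the shared marginal $\nu$ and the two numerator measures to be $\nu_A(B)=\J P(A\times B)$ and $\lambda_C(B):=\J Q(B\times C)$, both finite. The key point is that the integrability hypothesis of that theorem holds automatically for couplings: since $\J P$ has right marginal $\nu$ and $\J Q$ has left marginal $\nu$, monotonicity of measures gives $\nu_A\le\nu$ and $\lambda_C\le\nu$, so both densities are bounded by $1$; hence $\ffi_\DD^+\le 1$ and $\ggi_\DD^+\le 1$ pointwise and $\int_Y\ffi_\DD^+\ggi_\DD^+\,d\nu\le\nu(Y)<\infty$. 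Theorem~\ref{thm:arbint} then guarantees that the limit defining $(\J P\cmp\J Q)(A\times C)$ in \eqref{eq:comp} exists and equals $\int_Y f_A\,g_C\,d\nu$, where $f_A=d\nu_A/d\nu$ and $g_C=d\lambda_C/d\nu$ are the densities supplied by the Lebesgue--\RN\ theorem; moreover $g_C=Q(\cdot,C)$ $\nu$-a.e., since $\lambda_C(B)=\int_B Q(t,C)\,d\nu(t)$ by \eqref{eq:JPdef}.

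It then remains to reconcile the two expressions. Using $f_A=d\nu_A/d\nu$ I rewrite $\int_Y f_A\,g_C\,d\nu=\int_Y g_C\,d\nu_A$, and because $\nu_A\ll\nu$ the a.e.\ identity $g_C=Q(\cdot,C)$ persists $\nu_A$-a.e., giving $\int_Y g_C\,d\nu_A=\int_Y Q(t,C)\,d\nu_A(t)$, which is exactly the left-hand side computed above. Thus the two joint measures agree on every rectangle, hence everywhere. The main obstacle is not algebraic but measure-theoretic bookkeeping: correctly matching the abstract densities $f_A,g_C$ produced by Theorem~\ref{thm:arbint} with the concrete kernel data, and verifying the integrability premise — which is precisely why the boundedness $\nu_A,\lambda_C\le\nu$ is the crucial observation. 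As a sanity check one should also read off the boundary cases $A=X$ and $C=Z$, which recover the marginals $\mu$ and $\xi=\mu\cmp P\cmp Q$ and confirm that $\J P\cmp\J Q$ is a genuine coupling, so that $\J$ indeed lands in $\JDist$.
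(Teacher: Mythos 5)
Your proof is correct and reaches the same three checkpoints as the paper---identities by definition, faithfulness via the equivalence (i)$\Leftrightarrow$(iv) of Lemma~\ref{lem:disintegrate2}, and the real work in showing $\J(P\cmp Q)=\J P\cmp\J Q$ on rectangles---but your treatment of the composition step takes a genuinely different route. The paper evaluates the limit in \eqref{eq:comp} directly: it sandwiches $(\nu\cmp B\cmp Q)(C)/\nu(B)$ between $\inf_{t\in B}Q(t,C)$ and $\sup_{t\in B}Q(t,C)$, so that the partition sums become Darboux-type sums of the bounded measurable function $t\mapsto Q(t,C)$ against the measure $\mu\cmp A\cmp P$, and the limit is read off as $\int_Y Q(t,C)\,(\mu\cmp A\cmp P)(dt)=(\mu\cmp A\cmp P\cmp Q)(C)$; no Radon--Nikodym derivatives are ever mentioned. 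You instead invoke Theorem~\ref{thm:arbint} on the mediating space with reference measure $\nu$ and numerators $\nu_A=\J P(A\times-)$ and $\lambda_C=\J Q(-\times C)$, discharge its integrability premise by the coupling bounds $\nu_A\le\nu$ and $\lambda_C\le\nu$ (a clean observation the paper never needs to state, since its sandwich argument sidesteps the premise entirely), identify the resulting $f_A$ and $g_C$ with the densities $d\nu_A/d\nu$ and $d\lambda_C/d\nu$ via Theorem~\ref{thm:LRN}, and finish with the change-of-measure identity and the $\nu$-a.e.\ equality $g_C=Q(\cdot,C)$ coming from \eqref{eq:JPdef}. Both arguments are sound. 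Yours is more modular: it reuses Theorem~\ref{thm:arbint} as a black box rather than redoing an approximation estimate, and it makes visible exactly where absolute continuity and uniqueness of densities enter. The one piece of bookkeeping you leave implicit is that $f_A$ and $g_C$ must be produced from a common sufficiently fine $\sqle$-chain so that Theorem~\ref{thm:arbint} and Theorem~\ref{thm:LRN} apply simultaneously; this is harmless and is exactly what the paper's remark after Theorem~\ref{thm:LRN} about taking least common refinements licenses. The paper's sandwich, by contrast, is more self-contained for this particular step: by pinning the cell ratios directly to $\inf/\sup$ of the kernel, it never has to match abstract approximant limits with concrete densities. Your closing check that $A=X$, $C=Z$ recover the marginals $\mu$ and $\mu\cmp P\cmp Q$ is a worthwhile addition that the paper folds into the well-definedness of composition in $\JDist$ rather than into this theorem.
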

\begin{proof}
Composition in $\Krn$ is defined by $[P]_\mu\cmp[Q]_\nu = [P\cmp Q]_\mu$. This is well defined by Lemma \ref{lem:disintegrate2}.
To confirm functoriality, we show that $\J(P\cmp Q) = \J P\cmp\J Q$ and that the $\J 1_X$ are identities for composition in $\JDist$.

For composition, using the fact that for any $P$, $\J P(A\times B) = (\mu\cmp A\cmp P)(B)$, the left-hand side gives
\begin{align*}
\J(P\cmp Q)(A\times C) &= (\mu\cmp A\cmp P\cmp Q)(C).
\end{align*}
For the right-hand side, we observe that
\begin{align*}
\inf_{t\in B} Q(t,C)\int_B\nu(dt) &\le \int_B Q(t,C)\,\nu(dt)
\le \sup_{t\in B} Q(t,C)\int_B \nu(dt),
\end{align*}
or in other words,
\begin{align*}
\inf_{t\in B} Q(t,C)\,\nu(B)
&\le (\nu\cmp B\cmp Q)(C)
\le \sup_{t\in B} Q(t,C)\,\nu(B),
\end{align*}
so for $\nu(B)>0$,
\begin{align*}
\inf_{t\in B} Q(t,C) &\le \frac{(\nu\cmp B\cmp Q)(C)}{\nu(B)} \le \sup_{t\in B} Q(t,C).
\end{align*}
Thus
\begin{align*}
(\J P\cmp\J Q)(A\times C)
&= \lim_\DD \sumBD \fras{\J P(A\times B)\cdot\J Q(B\times C)}{\nu(B)}
= \lim_\DD \sumBD \fras{(\mu\cmp A\cmp P)(B)\cdot(\nu\cmp B\cmp Q)(C)}{\nu(B)}\\
&= \lim_\DD \sumBD (\mu\cmp A\cmp P)(B)\cdot \sup_{t\in B} Q(t,C)
= \int_Y (\mu\cmp A\cmp P)(dt)\cdot Q(t,C)
= (\mu\cmp A\cmp P\cmp Q)(C).
\end{align*}

For the identities, $\J 1_X\cmp\J P = \J(1_X\cmp P) = \J P$ and $\J P\cmp\J 1 = \J(P\cmp 1_Y) = \J P$.
The equivalence of Lemma \ref{lem:disintegrate2}(i) and (iv) establishes the faithfulness of the embedding. 
\end{proof}
The embedding is also full when $\Krn$ is restricted to standard Borel spaces---hence, the category considered by Dahlqvist et al.\cite{Dahlqvist18} can be fully and faithfully embedded in $\JDist$. 

\section{Conclusion}
\label{sec:pointfree}

We have presented a category of joint distributions $\JDist$ in which composition can be defined without reference to disintegration. To define this composition we explored approximants and an enhanced version of the \RN\ theorem. Our motivation to define this new category is to provide a setting in which semantics of probabilistic programs can be done under 
 weaker assumptions than previous work (where e.g. standard Borel spaces were considered). We showed that the category $\Krn$ of Markov kernels on countably generated spaces can be faithfully embedded to $\JDist$ and, when $\Krn$ is restricted to standard Borel spaces, the embedding is also full.

Although the assumption of spaces being standard Borel may not seem to be a very restrictive one, we believe our new way of composing joint measures will be profitable nevertheless.
The category $\JDist$, with its definition of composition, is a natural domain to contrive and prove correctness of approximation schemes of joint distributions for probabilistic semantics. 
This is something we already hinted at with the discrete approach in \Cref{sec:finite}.

A natural direction for future work is to explore the formulation of a point-free treatment based on the observation that the objects $(X,\AA,\mu)$ of $\Krn$ and $\JDist$ do not really depend on the measure $\mu$, but only its $\sigma$-ideal of nullsets.

\section*{Acknowledgments}

{Thanks to Fredrik Dahlqvist, Vincent Danos, Nate Foster, Justin Hsu, Bart Jacobs, Bobby Kleinberg, Radu Mardare, Prakash Panangaden, Daniel Roy, and Steffen Smolka. Special thanks to Sam Staton for catching a serious error in an earlier draft. Thanks to the Bellairs Research Institute of McGill University for providing a wonderful research environment. The support of the National Science Foundation under grant CCF-2008083 is gratefully acknowledged.} 

\bibliographystyle{entics}
\bibliography{prob,jdist}

\begin{thebibliography}{10}
\providecommand{\url}[1]{\texttt{#1}}
\providecommand{\urlprefix}{ }
\providecommand{\eprint}[2][]{\url{#2}}

\bibitem{AbramskyBlutePanangaden99}
Abramsky, S., R.~Blute and P.~Panangaden, \emph{Nuclear and trace ideals in
  tensored *-categories}, Journal of Pure and Applied Algebra \textbf{143},
  pages 3--47 (1999), ISSN 0022-4049.
\newline\urlprefix\url{https://doi.org/https://doi.org/10.1016/S0022-4049(98)00106-6}

\bibitem{ChangPollard97}
Chang, J.~T. and D.~Pollard, \emph{Conditioning as disintegration}, Statistica
  Neerlandica \textbf{51}, pages 287--317 (1997).
  \newline\urlprefix\url{ https://doi.org/10.1111/1467-9574.00056}

\bibitem{ChoJacobs17}
Cho, K. and B.~Jacobs, \emph{Disintegration and {B}ayesian inversion, both
  abstractly and concretely}, CoRR \textbf{abs/1709.00322} (2017).
  \eprint{1709.00322}.
\newline\urlprefix\url{http://arxiv.org/abs/1709.00322}

\bibitem{CulbertsonSturtz14}
Culbertson, J. and K.~Sturtz, \emph{A categorical foundation for {B}ayesian
  probability}, Applied Categorical Structures \textbf{22}, page 647–662
  (2014).
\newline\urlprefix\url{https://doi.org/10.1007/s10485-013-9324-9}

\bibitem{Dahlqvist18}
Dahlqvist, F., V.~Danos, I.~Garnier and A.~Silva, \emph{Borel kernels and their
  approximation, categorically} (2018).
\newline\urlprefix\url{https://arxiv.org/abs/1803.02651v2}

\bibitem{Doberkat07}
Doberkat, E.-E., \emph{Stochastic Relations: Foundations for Markov Transition
  Systems}, Studies in Informatics, Chapman Hall (2007), ISBN: 978-1584889410.

\bibitem{Faden85}
Faden, A.~M., \emph{The existence of regular conditional probabilities:
  necessary and sufficient conditions}, Ann. Prob. \textbf{13}, pages 288--298
  (1985).
\newline\urlprefix\url{https://projecteuclid.org/download/pdf_1/euclid.aop/1176993081}

\bibitem{Church08}
Goodman, N.~D., V.~K. Mansinghka, D.~Roy, K.~Bonawitz and J.~B. Tenenbaum,
  \emph{Church: a language for generative models}, in: \emph{Proc. 24th Conf.
  Uncertainty in Artificial Intelligence (UAI'08)}, page 220–229 (2008). ISBN: 0974903949

\bibitem{Halmos13}
Halmos, P.~R., \emph{Measure theory}, Springer (2013).
\newline\urlprefix\url{https://doi.org/10.1007/978-1-4684-9440-2}

\bibitem{Pachl78}
Pachl, J.~K., \emph{Disintegration and compact measures}, Math. Scand.
  \textbf{43}, pages 157--168 (1978).
 \newline\urlprefix\url{https://doi.org/10.7146/math.scand.a-11771}

\bibitem{Panangaden09}
Panangaden, P., \emph{Labelled Markov Processes}, Imperial College Press
  (2009).
   \newline\urlprefix\url{https://doi.org/10.1142/p595}

\bibitem{Rao87}
Rao, M.~M., \emph{Measure Theory and Integration}, Wiley-Interscience (1987), ISBN: 978-0824754013.

\bibitem{cantor}
Smolka, S., P.~Kumar, N.~Foster, D.~Kozen and A.~Silva, \emph{Cantor meets
  Scott: semantic foundations for probabilistic networks}, in: G.~Castagna and
  A.~D. Gordon, editors, \emph{Proceedings of the 44th {ACM} {SIGPLAN}
  Symposium on Principles of Programming Languages, {POPL} 2017, Paris, France,
  January 18-20, 2017}, pages 557--571, {ACM} (2017).
\newline\urlprefix\url{https://doi.org/10.1145/3009837.3009843}

\bibitem{Anglican15}
Tolpin, D., J.-W. van~de Meent and F.~Wood, \emph{Probabilistic programming in
  {A}nglican}, in: \emph{Joint European Conference on Machine Learning and
  Knowledge Discovery in Databases (ECMLKDD '15)}, pages 308--311 (2015). Available online at \url{https://link.springer.com/content/pdf/10.1007/978-3-319-23461-8_36.pdf}

\end{thebibliography}

\end{document}